\documentclass{arxiv}
\usepackage{amsmath}
\usepackage{amssymb}
\usepackage{amsfonts}
\usepackage{amstext}
\usepackage{amsopn}
\usepackage{amsxtra}
\usepackage{pst-all}
\usepackage[latin1]{inputenc}
\usepackage{dsfont}
\usepackage{mathrsfs}
\usepackage{multirow}
\usepackage{graphicx}
%
%
\newcommand\R{{\ensuremath {\mathbb R} }}
\newcommand\C{{\ensuremath {\mathbb C} }}

\newcommand\1{{\ensuremath {\mathds 1} }}
\renewcommand\phi{\varphi}
\newcommand{\alp}{\boldsymbol{\alpha}}

\renewcommand{\to}{\rightarrow}

\newcommand{\cE}{\mathcal{E}}

\newcommand\ii{{\ensuremath {\infty}}}
\newcommand\pscal[1]{{\ensuremath{\left\langle #1 \right\rangle}}}
\newcommand{\norm}[1]{ \left| \! \left| #1 \right| \! \right| }

\renewcommand{\epsilon}{\varepsilon}
\newcommand{\sP}{{\,^1\!P}}
\newcommand{\tP}{{\,^3\!P}}
\newcommand{\tD}{{\,^3\!D}}

\begin{document}
\date{\today}
\title{\Large Symmetry Breaking of Relativistic Multiconfiguration\\  \medskip Methods in the Nonrelativistic Limit}
\authormark{Maria J. ESTEBAN, Mathieu LEWIN \& Andreas SAVIN}
\runningtitle{Failure of Multiconfiguration Methods}

\author{Maria J. ESTEBAN}
\address{CNRS \& Ceremade (CNRS UMR 7534), Universit{\'e} Paris-Dauphine\\ Place
du Mar{\'e}chal de Lattre de Tassigny, 75775 Paris Cedex 16 - France.\\ Email: \email{esteban@ceremade.dauphine.fr}}

\author{Mathieu LEWIN}
\address{CNRS \& Laboratoire de Mathématiques (CNRS UMR 8088), Universit{\'e} de Cergy-Pontoise\\ 2, avenue Adolphe Chauvin, 95 302 Cergy-Pontoise Cedex - France.\\ Email: \email{Mathieu.Lewin@math.cnrs.fr}}

\author{Andreas SAVIN}
\address{CNRS \& Laboratoire de Chimie Théorique (CNRS UMR 7616), Université Pierre et Marie Curie\\ 4 place Jussieu, Case courrier 137, 75252 Paris Cedex 05 - France.\\ Email: \email{andreas.savin@lct.jussieu.fr}}

\maketitle

\bigskip

\begin{abstract}
The \emph{multiconfiguration Dirac-Fock} method allows to calculate the state of relativistic electrons in atoms or molecules. This method has been known for a long time to  provide certain wrong predictions in the nonrelativistic limit. We study in full mathematical details the nonlinear model obtained in the nonrelativistic limit for $Be$-like atoms. We show that the method with $sp+pd$ configurations in the $J=1$ sector leads to a symmetry breaking phenomenon in the sense that the ground state is never an eigenvector of $L^2$ or $S^2$. We thereby complement and clarify some previous studies.

\smallskip

\noindent{\scriptsize\copyright~2009 by the authors. This paper may be reproduced, in its entirety, for non-commercial~purposes.}
\end{abstract}

\tableofcontents

\bigskip\bigskip

\section*{Introduction}
\addcontentsline{toc}{section}{Introduction}
Simulations of relativistic systems of Atomic and Molecular Physics and Chemistry are now of widespread use and the need for reliable methods is stronger than ever \cite{PyySch-04,IndLinDes-05}. The difficulties of doing relativistic simulations are however probably largely underestimated. One the the most common method is the so-called \emph{multiconfiguration Dirac-Fock} (or \emph{Dirac-Hartree-Fock}) theory. This method has been known for a long time to  provide certain wrong predictions in the nonrelativistic limit \cite{WooPyp-80,Pyper-83,HuaKimCheDes-82,KimParMarIndDes-98,FroeseFischer-00,IndLinDes-05}. The purpose of the present article is to explain in full mathematical details the encountered difficulties.

In atomic relativistic calculations, one usually imposes the total angular momentum $J^2=(L+S)^2$ whereas in nonrelativistic calculations both $L^2$ and $S^2$ are imposed. It has been observed by Kim et al in \cite{KimParMarIndDes-98} that a certain multiconfiguration ground state of the symmetry space $J=1$ for $Be$-like (Beryllium-like) atoms, was converging in the nonrelativistic limit to a state which was not an eigenfunction of $S^2$ and $L^2$. This led to erroneous values of certain transition probabilities like spin-forbidden ones.

In the multiconfiguration methods, the wavefunction is taken to be a linear combination of certain \emph{configurations}. Both the linear coefficients and the orbitals in the configurations are variational parameters, leading to a highly nonlinear problem (with respect to variations of the
orbitals). Each configuration is itself a sum of Slater determinants whose coefficients are fixed such that the configuration belongs to a chosen symmetry subspace ($J=1$ in the case of the example studied in \cite{KimParMarIndDes-98}).

It was noticed in \cite{IndLinDes-05} that the ``error" is slowly disappearing when the number of determinants is increased. This suggests that the phenomenon is purely nonlinear, and that it has nothing to do with the nonrelativistic limit procedure in itself: it is the model obtained in the limit which does not fulfill the usual symmetry properties of nonrelativistic Quantum Chemistry or Physics models. In the limit, one obtains a nonlinear model for which all the configurations have $J=1$ but they do not necessarily have a fixed orbital angular momentum $L^2$ or a fixed total spin $S^2$. In those cases, the calculated ground state  is not an eigenfunction of $L^2$ or $S^2$.

In a linear model, any nondegenerate ground state automatically has the symmetry of the system but in a nonlinear model there could be a \emph{symmetry breaking} phenomenon: although the system has a certain symmetry, the ground state does not necessarily possess this symmetry. One then obtains several minimizers and it is only the set of all ground states which is invariant under the symmetry group.

\medskip

In this paper we study in detail the nonlinear model obtained in the nonrelativistic limit for $Be$-like atoms in the symmetry $J=1$, when only $s$, $p$ and $d$ shells are considered\footnote{That is, we only consider the lowest ``ungerade'' states in the symmetry $J=1$.}, following \cite{KimParMarIndDes-98}. Such an atom has four electrons which can only be in the following subshells: $1s_{1/2}$, $2s_{1/2}$, $2p_{1/2}$, $2p_{3/2}$, $3d_{3/2}$ and $3d_{5/2}$. Each subshell has a certain degeneracy but it is only described by one radial orbital. The distinction between the $2p_{1/2}$ and the $2p_{3/2}$ subshells is an artefact of the nonrelativistic limit. In nonrelativistic theories, the same radial orbital is used for  $2p_{1/2}$ and $2p_{3/2}$. The same holds for $3d_{3/2}$ and $3d_{5/2}$. The fact that the radial orbitals of $2p_{1/2}$ and $2p_{3/2}$ (or $3d_{3/2}$ and $3d_{5/2}$) are allowed to be different is, loosely speaking, similar to enlarging the variational set, which leads to symmetry breaking as will be explained below.

In MCDF theory one considers all the possible configurations of the symmetry $J=1$ which can be constructed upon these states for four electrons, and one optimizes both the radial orbitals of the subshells and the variational coefficients in front of the configurations. Among all these possible wavefunctions, only few of them are eigenfunctions of both $L^2$ and $S^2$. In the present case, there are three possible symmetries: $\sP_1$, $\tP_1$ and $\tD_1$ (corresponding to specific eigenvalues of $L^2$ and $S^2$, see Section \ref{sec:spd} for a precise definition). In Theorem \ref{thm:symmetric_fn} we give necessary and sufficient conditions on the radial orbitals and the coefficients for a wavefunction to be in one of these three symmetries: the radial orbitals of $2p_{1/2}$ and $2p_{3/2}$ (and of $3d_{3/2}$ and $3d_{5/2}$) must be the same and certain relations must hold between the configuration coefficients.

The issue discussed in \cite{KimParMarIndDes-98,FroeseFischer-00,IndLinDes-05} is whether `the' ground state\footnote{We put `the' in quotation marks to emphasize that there is no uniqueness.} of the symmetry $J=1$ is an eigenfunction of $L^2$ and $S^2$. In other words, does it belong to one of the previous symmetries? To address this question, we look at the ground states obtained by \emph{imposing} each of the symmetries $\sP_1$, $\tP_1$ and $\tD_1$, and ask ourselves whether these states can be stationary points and/or local minima of the full model where only $J=1$ is imposed, or not. Our results (Theorems \ref{thm:1P1}, \ref{thm:3D1} and \ref{thm:3P1}) are summarized in the following table:
\begin{center}
\begin{tabular}{|l|l|}
\hline
$\tD_1$ & is a stationary point, but is not a local minimum.\\
\hline
$\sP_1$ & is a stationary point, but is not a local minimum.\\
\hline
$\tP_1$ & has lower energy than $\tD_1$ and $\sP_1$, but is not a stationary point.\\
\hline
\end{tabular}
\end{center}

This shows in particular that enlarging the variational set by only fixing $J=1$ instead of both $L^2$ and $S^2$ leads to symmetry breaking for the wavefunction: the obtained new ground state is never an eigenvector of $L^2$ or $S^2$. It is however essential that all $s$, $p$ and $d$ orbitals are included in the model (similar results are expected when more shells are added, by the same arguments as the ones presented in the paper). As we will see in Theorem \ref{thm:sp}, there is no symmetry breaking for the ground state if only $s$ and $p$ orbitals are considered. 

As noticed first in \cite{KimParMarIndDes-98}, the question whether the constrained ground states of the symmetries $\sP_1$, $\tP_1$ and $\tD_1$ are stationary points of the full model or not, is related to certain properties of the occupation numbers of the orbitals. This is explained in detail in Remarks \ref{rmk:occ1} and \ref{rmk:occ2} below. 

The appearance of symmetry breaking is a well-known phenomenon in atomic multiconfiguration methods; it was encountered and rigorously examined in \cite{CanGalLew-06}. Similar issues occur in molecular calculations with regard to spacial symmetry, see, e.g.,  \cite{DavBor-83,EngLiu-83,McLLenPacEll-85,EisMor-00}. 
In this paper we do not propose any practical solution to this phenomenon. Löwdin who emphasized this issue in a famous discussion \cite{LykPra-63} (after stability results \cite{Slater-30b,CouFis-49,Slater-51,Thouless-60,Adams-62} in Hartree-Fock theory) called it a ``symmetry dilemma''. Our impression is that it is inherent to the way calculations are currently done. For the model studied in this paper, the only reasonable solution is probably to increase the number of determinants in order to decrease the effect of nonlinearities. 

We have tried to make our work accessible to both the Mathematics, Quantum Physics and Chemistry communities. In particular, we will state and prove some well-known results (like for instance a special case of the first Hund's rule \cite{KutMor-96}) for the convenience of the mathematical reader. We hope that our work will help in improving the exchanges between the different communities. On the one hand mathematicians should find the material allowing her/him to understand the models and the typical problems encountered in nonlinear quantum computations. On the other hand we hope physicists and chemists will appreciate our rigorous clarification of the phenomenon they have to deal with.

\section*{Notation}
We recall that the angular momentum reads ${\mathbf L}=x\times (-i\nabla)$, that ${\mathbf S}_k=\sigma_k/2$ where $\sigma_k$ are the well known Pauli matrices, and that ${\mathbf J}={\mathbf L}+{\mathbf S}$. For an $N$-body system, we still denote by ${\mathbf L}=\sum_{k=1}^Nx_k\times (-i\nabla_{x_k})$ the (vector-valued) angular momentum operator of the $N$ particles. A state will be denoted as $\,^{2S+1}\!L_J$ when it is an eigenfunction of $S^2$ with eigenvalue $S(S+1)$, of $L^2$ with eigenvalue $L(L+1)$ (with the identification $P$, $D$, $F$,\,... for $L=1,2,3,\,...$), and of $J^2$ with eigenvalue $J(J+1)$. We will use the notation $\,^{2S+1}\!L$ when it is an eigenfunction of $S^2$ and $L^2$ (with the same eigenvalues as before), but not necessarily an eigenfunction of $J^2$. For more details, we refer to \cite{LandauLifchitz,Thaller-04,Slater-60}.

\section{Model with $sp$ configurations only}
We consider Be-like atoms, i.e. atoms with 4 electrons, in the symmetry $J=1$. We start with a simplified multiconfiguration method employing only $s$ and $p$ shells. This means the $4$-body wavefunction takes the form \cite{KimParMarIndDes-98}
$$\Psi'= a\Phi'(1s_{1/2}^2\, 2s_{1/2}\,2p_{1/2})(R_0,R_1,R_2)+ b\Phi'(1s_{1/2}^2\, 2s_{1/2}\,2p_{3/2})(R_0,R_1,R_3)$$
where $a$ and $b$ are configuration-mixing coefficients and $R_0$, $R_1$, $R_2$ and $R_3$ are the radial functions of, respectively, the shells $1s_{1/2}$, $2s_{1/2}$, $2p_{1/2}$ and $2p_{3/2}$. The two configurations $\Phi'(1s_{1/2}^2\, 2s_{1/2}\,2p_{1/2})$ and $\Phi'(1s_{1/2}^2\, 2s_{1/2}\,2p_{3/2})$ are some linear combination of Slater determinants made upon the corresponding $4$-component shells, with fixed coefficients chosen such that $J=1$. We will not write the detailed form of the configurations here, but later we will give the precise expression of their nonrelativistic limits (see \eqref{relations_s12_p12} and \eqref{relations_s12_p32}). The Hamiltonian to be used is the Dirac $4$-body Coulomb operator which reads
$$H':=\sum_{i=1}^4\left(D^0_i-\frac{Z}{|x_i|}\right)+\sum_{1\leq i<j\leq 4}\frac{1}{|x_i-x_j|},$$
where
$D^0=c\alp\cdot p+mc^2\beta$
is the Dirac operator \cite{Thaller}.

Most atomic MCDF packages  aim at calculating a certain critical point of the energy $(R_0,R_1,R_2,R_3,a,b)\mapsto \pscal{\Psi,H\Psi}$, under the following constraints:
\begin{equation}
\int_0^\ii |R_0(r)|^2\,r^2dr=\int_0^\ii |R_1(r)|^2\,r^2dr=\int_0^\ii |R_2(r)|^2\,r^2dr=\int_0^\ii |R_3(r)|^2\,r^2dr=1,
\label{constraint_sp_1}
\end{equation}
\begin{equation}
\int_0^\ii \overline{R_0(r)}R_1(r)\,r^2dr=0,
\label{constraint_sp_2}
\end{equation}
\begin{equation}
a^2+b^2=1.
\label{constraint_sp_3}
\end{equation}

As the energy is not bounded below due to the negative spectrum of the Dirac operator, it is \emph{a priori} not at all obvious which critical point has to be considered and calculated numerically. Its existence in the infinite dimensional setting is also not clear at all.
However, using the methods of \cite{EstSer-99,EstSer-01}, one can prove that well-chosen critical points of this energy  converge as $c\to\ii$ to critical points of a certain nonrelativistic model which we will now describe in detail.

The variational set of $4$-body wavefunctions obtained in the nonrelativistic limit contains all the functions having the form
\begin{equation}
\Psi= a\Phi(1s_{1/2}^2\, 2s_{1/2}\,2p_{1/2})(R_0,R_1,R_2)+ b\Phi(1s_{1/2}^2\, 2s_{1/2}\,2p_{3/2})(R_0,R_1,R_3)
\label{form_wavefn_sp}
\end{equation}
but this time the two configurations are $2$-component functions, i.e. they only depend on the spin variable. Their relation with the usual nonrelativistic configurations are given by the following formula (see, e.g., \cite{CondonShortley-63} page 294)
\begin{equation}
\Phi(1s_{1/2}^2\, 2s_{1/2}\,2p_{1/2})(R_0,R_1,R_2):=
-\frac{1}{\sqrt{3}}{\,^1\!P_{sp}}(R_0,R_1,R_2)+\frac{\sqrt{2}}{\sqrt{3}}{\,^3\!P_{sp}}(R_0,R_1,R_2), 
\label{relations_s12_p12}
\end{equation}
\begin{equation}
\Phi(1s_{1/2}^2\, 2s_{1/2}\,2p_{3/2})(R_0,R_1,R_3):=\frac{\sqrt{2}}{\sqrt{3}}{\,^1\!P_{sp}}(R_0,R_1,R_3)+\frac{1}{\sqrt{3}}{\,^3\!P_{sp}}(R_0,R_1,R_3). 
\label{relations_s12_p32}
\end{equation}
Here ${\,^1\!P_{sp}}$ and ${\,^3\!P_{sp}}$ are some nonrelativistic configurations chosen such that
$$\left(S^2-k(k+1)\right)\,^{2k+1}\!P_{sp}=0,\qquad \left(L^2-2\right)\,^{2k+1}\!P_{sp}=0\quad\text{ and }\quad \left(J^2-2\right)\,^{2k+1}\!P_{sp}=0$$
for $k=0,1$, and which are made only of $s$ and $p$ orbitals. We should probably rather use the notation $\,^{2k+1}\!P_{1,sp}$ but we refrain to do so, for the sake of simplicity.
The form for the functions ${\,^1\!P_{sp}}$ and ${\,^3\!P_{sp}}$ (with $J_z=1$) is provided in Appendix A for the convenience of the reader. For many of our arguments, we will not need this explicit form.

The associated minimization principle reads
\begin{equation}
E_{sp}(J=1):=\inf_{\substack{R_0,R_1,R_2,R_3,a,b\\ \text{satisfying \eqref{constraint_sp_1}, \eqref{constraint_sp_2} and \eqref{constraint_sp_3}}}}\pscal{\Psi,H\Psi}
\label{def_min_sp}
\end{equation}
where $\Psi$ takes the  form  \eqref{form_wavefn_sp} and $H$ is the nonrelativistic Hamiltonian
$$H:=\sum_{i=1}^4\left(\frac{-\Delta_i}{2}-\frac{Z}{|x_i|}\right)+\sum_{1\leq i<j\leq 4}\frac{1}{|x_i-x_j|}.$$

Note that the Hamiltonian $H$ is real, hence each eigenfunction can be chosen to be real. For this reason, when passing to the nonlinear case, although in principle $H$ should act on complex functions, we will restrict ourselves to real mixing coefficients $\,a,b\,$ and  real-valued radial orbitals $R_k$. The same is done in most Quantum Chemistry or Physics packages. The extension of our arguments to complex functions does not present any difficulty.

The Hamiltonian $H$ commutes with both the total orbital angular momentum $\mathbf{L}$ and the total spin $\mathbf{S}$. For this reason, in a usual  \emph{nonrelativistic} multiconfiguration method, one always restricts the search to ground states of a certain symmetry class. The only configurations made of $s$ and $p$ orbitals satisfying $J=1$ are the ones corresponding to $L=1$ and $S=0$ (singlet) or $S=1$ (triplet), and which appear in \eqref{relations_s12_p12} and \eqref{relations_s12_p32}. Therefore, we will compare \eqref{def_min_sp} with the usual nonrelativistic methods described by the following variational problems:
\begin{equation}
E_{sp}({\,^1\!P_1}):=\inf_{\substack{R_0,R_1,R_2\\ \text{satisfying \eqref{constraint_sp_1} and \eqref{constraint_sp_2}}}}\pscal{{\,^1\!P_{sp}}(R_0,R_1,R_2),H{\,^1\!P_{sp}}(R_0,R_1,R_2)}, 
\label{def_min_sp_1P}
\end{equation}
\begin{equation}
E_{sp}({\,^3\!P_1}):=\inf_{\substack{R_0,R_1,R_2\\ \text{satisfying \eqref{constraint_sp_1} and \eqref{constraint_sp_2}}}}\pscal{{\,^3\!P_{sp}}(R_0,R_1,R_2),H{\,^3\!P_{sp}}(R_0,R_1,R_2)}. 
\label{def_min_sp_3P}
\end{equation}
Note that there is no configuration-mixing coefficient to optimize in the above minimization problems. For this reason, \eqref{def_min_sp_1P} and \eqref{def_min_sp_3P} should indeed be called \emph{Hartree-Fock} methods.

Our wavefunction \eqref{form_wavefn_sp} is always an eigenfunction corresponding to $L=1$ but it is not necessarily an eigenfunction of $S^2$. However, taking $R_2=R_3$ and choosing $a=b/\sqrt{2}$ (resp. $a=-b\sqrt{2}$), we see that our variational set of functions of the form \eqref{form_wavefn_sp} indeed contains all possible functions ${\,^3\!P_{sp}}(R_0,R_1,R_2)$ (resp. ${\,^1\!P_{sp}}(R_0,R_1,R_2)$). Hence we deduce that
\begin{equation}
\boxed{E_{sp}(J=1)\leq \min\big\{E_{sp}({\,^3\!P_1})\, ,\, E_{sp}({\,^1\!P_1})\big\}.}
\label{inegalite_sp}
\end{equation}
The specific case of (the first) Hund's rule proven below in Theorem \ref{thm:Hund} tells us that indeed $E_{sp}({\,^3\!P_1})<E_{sp}({\,^1\!P_1})$, see Corollary \ref{cor:Hund_sp}. In principle, however, the inequality in \eqref{inegalite_sp} could be strict in which case the minimizer would not be an eigenfunction of $S^2$, but instead a linear combination of $S=0$ and $S=1$ states.

We will see that this problem indeed does not occur for $sp$ mixing, as expressed by the
\begin{theorem}[The nonrelativistic limit for $sp$ is correct]\label{thm:sp}
We have
$$E_{sp}(J=1)=E_{sp}({\,^3\!P_1})<E_{sp}({\,^1\!P_1}).$$
Additionaly, any ground state $\Psi$ for $E_{sp}(J=1)$ satisfies $a=\varepsilon b\sqrt{2}$, $R_3=\varepsilon R_2$ for some $\epsilon=\pm1$.
\end{theorem}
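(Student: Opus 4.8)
The plan is to establish the nontrivial inequality $E_{sp}(J=1)\ge E_{sp}(\tP_1)$; combined with \eqref{inegalite_sp} and Corollary \ref{cor:Hund_sp} this gives $E_{sp}(J=1)=E_{sp}(\tP_1)<E_{sp}(\sP_1)$. Fix any $\Psi$ of the form \eqref{form_wavefn_sp} satisfying \eqref{constraint_sp_1}--\eqref{constraint_sp_3}. Substituting \eqref{relations_s12_p12}--\eqref{relations_s12_p32} and using that in these configurations the $2p$ shell carries occupation one, so that $\,^1\!P_{sp}(R_0,R_1,\cdot)$ and $\,^3\!P_{sp}(R_0,R_1,\cdot)$ depend \emph{linearly} on the unpaired $p$ radial orbital (clear from the explicit form recalled in Appendix A), I would rewrite
\begin{equation*}
\Psi=\,^1\!P_{sp}(R_0,R_1,\rho_0)+\,^3\!P_{sp}(R_0,R_1,\rho_1),\qquad \begin{pmatrix}\rho_0\\\rho_1\end{pmatrix}=\frac{1}{\sqrt3}\begin{pmatrix}-1&\sqrt2\\\sqrt2&1\end{pmatrix}\begin{pmatrix}a\,R_2\\b\,R_3\end{pmatrix}.
\end{equation*}
The $2\times2$ matrix above is orthogonal, and since $R_2,R_3$ are $p$ orbitals they are automatically $L^2(r^2\,dr)$-orthogonal to the $s$ orbitals $R_0,R_1$ (consistently, no constraint linking them appears in \eqref{constraint_sp_1}--\eqref{constraint_sp_3}); hence $\|\rho_0\|^2+\|\rho_1\|^2=a^2+b^2=1$, and for any radial function $\rho$ one has $\|\,^1\!P_{sp}(R_0,R_1,\rho)\|=\|\,^3\!P_{sp}(R_0,R_1,\rho)\|=\|\rho\|$.

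Next I would use that $H$ commutes with $S^2$: the two pieces above lie in the orthogonal, $H$-invariant spin sectors $S=0$ and $S=1$, so all cross terms vanish and
\begin{equation*}
\pscal{\Psi,H\Psi}=\pscal{\,^1\!P_{sp}(R_0,R_1,\rho_0),H\,^1\!P_{sp}(R_0,R_1,\rho_0)}+\pscal{\,^3\!P_{sp}(R_0,R_1,\rho_1),H\,^3\!P_{sp}(R_0,R_1,\rho_1)}.
\end{equation*}
If $\rho_0\neq0$, set $\hat\rho_0:=\rho_0/\|\rho_0\|$; then $(R_0,R_1,\hat\rho_0)$ is admissible in \eqref{def_min_sp_1P} and, by the quadratic scaling that comes from linearity, the first term equals $\|\rho_0\|^2\pscal{\,^1\!P_{sp}(R_0,R_1,\hat\rho_0),H\,^1\!P_{sp}(R_0,R_1,\hat\rho_0)}\ge\|\rho_0\|^2E_{sp}(\sP_1)$; the bound $\ge\|\rho_0\|^2E_{sp}(\sP_1)$ holds trivially if $\rho_0=0$. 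Treating the second term likewise with \eqref{def_min_sp_3P} and invoking Corollary \ref{cor:Hund_sp} ($E_{sp}(\sP_1)>E_{sp}(\tP_1)$), we get
\begin{equation*}
\pscal{\Psi,H\Psi}\ge\|\rho_0\|^2E_{sp}(\sP_1)+\|\rho_1\|^2E_{sp}(\tP_1)\ge\big(\|\rho_0\|^2+\|\rho_1\|^2\big)E_{sp}(\tP_1)=E_{sp}(\tP_1),
\end{equation*}
and taking the infimum over $\Psi$ proves the first assertion.

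For the second assertion I would take the existence of a ground state $\Psi$ for $E_{sp}(J=1)$ from the standard concentration-compactness/binding analysis of atomic multiconfiguration functionals, and then note that a ground state must saturate the whole chain above. In particular $\|\rho_0\|^2\big(E_{sp}(\sP_1)-E_{sp}(\tP_1)\big)=0$, so the \emph{strict} Hund inequality of Corollary \ref{cor:Hund_sp} forces $\rho_0=0$, i.e. $a\,R_2=\sqrt2\,b\,R_3$. Taking $L^2(r^2\,dr)$-norms gives $|a|=\sqrt2\,|b|$, which together with $a^2+b^2=1$ shows $a\neq0$ and $b\neq0$; hence $R_3=\varepsilon R_2$ and $a=\varepsilon b\sqrt2$ with $\varepsilon:=a/(\sqrt2\,b)$, which satisfies $\varepsilon=\pm1$ since $|a|=\sqrt2\,|b|$. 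This is the claimed form.

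The only points requiring care are the two structural facts used in the first step: linearity of $\,^1\!P_{sp}(R_0,R_1,\cdot)$ and $\,^3\!P_{sp}(R_0,R_1,\cdot)$ in the unpaired orbital together with the norm identity $\|\,^{2k+1}\!P_{sp}(R_0,R_1,\rho)\|=\|\rho\|$, and the decoupling of $\pscal{\Psi,H\Psi}$ along the $S=0$/$S=1$ sectors. Both follow from the explicit expressions in Appendix A and from $[H,S^2]=0$; once they are available, the remainder is the elementary orthogonal change of variables $(aR_2,bR_3)\mapsto(\rho_0,\rho_1)$ plus Hund's rule, so I do not anticipate a real obstacle. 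The whole content of the statement is that the $sp$, $J=1$ variational set is exactly the union of the $\sP_1$ and $\tP_1$ Hartree--Fock variational sets, glued together by a single orthogonal rotation in the two $p$-orbital coordinates, and the argument above just makes this explicit.
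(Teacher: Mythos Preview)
Your argument is correct and essentially identical to the paper's. You rewrite $\Psi$ as a sum of a $\sP_1$ part and a $\tP_1$ part by exploiting the linearity of each configuration in the unpaired $p$ radial orbital, decouple the energy via $[H,S^2]=0$, bound each piece by the corresponding constrained infimum, and conclude with Hund's rule; the equality case then forces $\rho_0=0$. The paper does exactly this, writing $\Psi=\frac{1}{\sqrt3}\sP_{sp}(R_0,R_1,-aR_2+b\sqrt2 R_3)+\frac{1}{\sqrt3}\tP_{sp}(R_0,R_1,a\sqrt2 R_2+bR_3)$ and computing $\norm{-aR_2+b\sqrt2 R_3}^2+\norm{a\sqrt2 R_2+bR_3}^2=3$ directly, which is just your observation that the $2\times2$ Condon--Shortley matrix is orthogonal. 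One small wording issue: when you say ``$R_2,R_3$ are $p$ orbitals [and] automatically $L^2(r^2\,dr)$-orthogonal to the $s$ orbitals $R_0,R_1$'', you mean that the \emph{full} orbitals $p_m(R_j)$ are orthogonal in $L^2(\R^3)$ to $s(R_i)$ by the angular part, not that the radial functions themselves are orthogonal in $L^2(r^2dr)$; the norm identity $\norm{\,^{2k+1}\!P_{sp}(R_0,R_1,\rho)}=\norm{\rho}$ is what you actually need and it follows from this.
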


The rest of this section is devoted to the (simple) proof of the above theorem. As we will need it in the following, we start by proving a special case of the well-known (first) Hund's rule (for an excellent discussion of Hund's rules, we refer to  \cite{KutMor-96}, where a result similar to the following one is proved).
\begin{theorem}[Hund's rule for singlet/triplet states]\label{thm:Hund}
For $K\geq2$, $M\geq1$, let $\{g_j\}_{j=1}^{2M}\cup\{f_i\}_{i=1}^K$ be an orthonormal system of $L^2(\R^3,\C)$ and\footnote{Throughout the paper, we use the convention $(f_1\wedge\cdots\wedge f_N)(x_1,...,x_N)=(N!)^{-1/2}\det(f_i(x_j))$.}
$$\Psi_1=\!\!\!\sum_{j=1}^Mc_{j}\left(\bigwedge_{i=1}^Kf_i^\uparrow\wedge f_i^\downarrow\right)\wedge g_{2j-1}^\uparrow\wedge g_{2j}^\downarrow,\quad \Psi_2=\!\!\!\sum_{j=1}^Mc_{j}\left(\bigwedge_{i=1}^Kf_i^\uparrow\wedge f_i^\downarrow\right)\wedge g_{2j-1}^\downarrow\wedge g_{2j}^\uparrow$$
where $f^\tau(x,\sigma)=f(x)\delta_{\tau}(\sigma)$ and $(c_j)\in\C^{M}\setminus\{0\}$. Let
$$H:=\sum_{i=1}^{2(K+1)}(h\otimes I_2)_i+\sum_{1\leq i<j\leq 2(K+1)}V(x_i-x_j)$$
be a Hamiltonian where $h$ is a self-adjoint operator on $L^2(\R^3,\C)$ and $V$ is a positive real function. Then, if $\Psi_1$ and $\Psi_2$ belong to the form domain of $H$, one has
\begin{equation}
\pscal{\Psi_1+\Psi_2,H(\Psi_1+\Psi_2)} < \pscal{\Psi_1-\Psi_2,H(\Psi_1-\Psi_2)}.
\end{equation}
\end{theorem}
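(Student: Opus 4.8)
The plan is to expand both sides in terms of one- and two-body matrix elements and to show that the difference is governed entirely by a single positive exchange-type integral. First I would compute $\pscal{\Psi_1\pm\Psi_2,H(\Psi_1\pm\Psi_2)}$ using $\pscal{\Psi_1\pm\Psi_2,H(\Psi_1\pm\Psi_2)}=\pscal{\Psi_1,H\Psi_1}+\pscal{\Psi_2,H\Psi_2}\pm 2\,{\rm Re}\,\pscal{\Psi_1,H\Psi_2}$, so that the quantity to control is
\begin{equation}
\pscal{\Psi_1+\Psi_2,H(\Psi_1+\Psi_2)}-\pscal{\Psi_1-\Psi_2,H(\Psi_1-\Psi_2)}=4\,{\rm Re}\,\pscal{\Psi_1,H\Psi_2}.
\end{equation}
The key structural observation is that $\Psi_1$ and $\Psi_2$ are built from the same spatial orbitals and differ only by a swap of the spin labels on the pair $(g_{2j-1},g_{2j})$; in particular $\Psi_2$ is obtained from $\Psi_1$ by acting with the spin operator that exchanges $\uparrow$ and $\downarrow$ in the last two coordinates (up to sign), so $\pscal{\Psi_1,\Psi_2}=0$ and, since $h\otimes I_2$ is spin-independent, the one-body part of $H$ together with the spin-independent two-body part of $V$ contributes $0$ to $\pscal{\Psi_1,H\Psi_2}$. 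What survives is only the two-body Coulomb-type term acting on pairs of coordinates where the spin configurations of $\Psi_1$ and $\Psi_2$ differ in exactly the way that a single transposition of two orbitals can reconcile.

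Next I would carry out the Slater–Condon bookkeeping: for each term $c_j\overline{c_{j'}}$ in the double sum, $\pscal{(\text{det}_j)^\uparrow_\downarrow,\,V_{kl}\,(\text{det}_{j'})^\downarrow_\uparrow}$ is nonzero only when $j=j'$ (the closed shells $f_i^\uparrow\wedge f_i^\downarrow$ and the orthogonality of the $g$'s force the two determinants to agree in all but the last two spin-orbitals, and a single $V$ can couple determinants differing in at most two spin-orbitals), and in that case it reduces to minus the exchange integral
\begin{equation}
-\iint \frac{\overline{g_{2j-1}(x)}\,\overline{g_{2j}(y)}\,g_{2j}(x)\,g_{2j-1}(y)}{\ }\,V(x-y)\,dx\,dy,
\end{equation}
with an overall negative sign coming from the fermionic antisymmetry of the determinant together with the spin swap. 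Summing over $j$ and over the pair index, and using that $V$ is a positive function so that each such integral is of the form $\iint V(x-y)\,h(x)\,\overline{h(y)}$ type — more precisely it equals, after writing it as a single sum over $j$ with the closed shells integrated out, a manifestly nonpositive quantity — one obtains ${\rm Re}\,\pscal{\Psi_1,H\Psi_2}<0$ strictly, because $(c_j)\neq 0$ and the orbitals are linearly independent. Plugging back gives the claimed strict inequality.

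The main obstacle I expect is twofold. First, getting the combinatorial signs exactly right: one must track the parity of the permutation needed to bring $g_{2j-1}^\downarrow\wedge g_{2j}^\uparrow$ into alignment with $g_{2j-1}^\uparrow\wedge g_{2j}^\downarrow$ after the spin relabeling, and verify that the closed-shell block $\bigwedge_i f_i^\uparrow\wedge f_i^\downarrow$ contributes no extra sign and no cross terms (this is where the hypothesis $K\ge 2$ versus $K\ge 1$, and the wedge normalization convention in the footnote, must be used carefully). Second, establishing \emph{strictness}: one needs that the relevant exchange integral $\iint V(x-y)\rho(x,y)\,dx\,dy$ with $\rho$ built from the $g_j$'s is strictly positive, which follows from positivity of $V$ on a set of positive measure together with $(c_j)\neq 0$ and orthonormality of the $g_j$, but this should be spelled out (e.g. by exhibiting it as $\sum$ of squared moduli of convolutions, or by a direct positivity argument) rather than asserted. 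Everything else is routine Wick/Slater–Condon expansion.
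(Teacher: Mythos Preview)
Your overall route coincides with the paper's: reduce to $-4\Re\pscal{\Psi_1,H\Psi_2}$, observe that the one-body part vanishes, and evaluate the two-body contribution via Slater--Condon rules. The gap is in the bookkeeping of that last step.

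The claim that only $j=j'$ survives is wrong. For $j\ne j'$ the determinants $D_j^{(1)}$ and $D_{j'}^{(2)}$ share all the closed-shell spin-orbitals and differ in \emph{exactly} two spin-orbitals (the pair $g_{2j-1}^\uparrow,\,g_{2j}^\downarrow$ is replaced by $g_{2j'-1}^\downarrow,\,g_{2j'}^\uparrow$). Your own parenthetical---``a single $V$ can couple determinants differing in at most two spin-orbitals''---is precisely the reason these cross terms do \emph{not} vanish: they sit at the boundary of what a two-body operator can connect, and the exchange piece survives. Dropping them is the error.

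This is not cosmetic, because it is exactly what resolves the sign obstacle you anticipated. Your retained diagonal terms are of the form
\[
-|c_j|^2\iint V(x-y)\,\overline{g_{2j-1}(x)}\,g_{2j}(x)\;\overline{g_{2j}(y)}\,g_{2j-1}(y)\,dx\,dy
\;=\;-|c_j|^2\iint V(x-y)\,h_j(x)\,\overline{h_j(y)}\,dx\,dy,
\]
with $h_j=\overline{g_{2j-1}}\,g_{2j}$; this is not a modulus squared and pointwise positivity of $V$ alone does not force a sign. The paper instead keeps all $j,j'$ and collapses the double sum into the single expression
\[
\pscal{\Psi_1,\mathbb{V}\Psi_2}
=-\iint V(x-y)\left|\sum_{j} c_j\,g_{2j-1}(x)\,g_{2j}(y)\right|^{2}dx\,dy,
\]
which is strictly negative because $V>0$ and the sum is not identically zero (the $g_j$ are orthonormal and $(c_j)\neq0$). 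The off-diagonal pieces are thus essential both for getting a manifestly signed integrand and for strictness.

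A smaller point: the vanishing of the one-body term is most cleanly argued as in the paper---each Slater determinant in $\Psi_1$ contains two spin-orbitals orthogonal to every spin-orbital of each determinant in $\Psi_2$, so any one-body operator (which couples determinants differing in at most one spin-orbital) gives zero. Your appeal to spin-independence of $h\otimes I_2$ together with $\pscal{\Psi_1,\Psi_2}=0$ does not by itself imply this.
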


\begin{remark}
Let us emphasize that we do not impose any spacial symmetry on the functions $f_i$ and $g_j$.
\end{remark}

\begin{proof}
We have
$$\pscal{\Psi_1-\Psi_2,H(\Psi_1-\Psi_2)}-\pscal{\Psi_1+\Psi_2,H(\Psi_1+\Psi_2)}=-4\Re\pscal{\Psi_1,H\Psi_2}=-4\Re\pscal{\Psi_1,\mathbb{V}\Psi_2}$$ 
where $\mathbb{V}$ is the interaction (two-body) potential involving the function $V$. In the last equality we have used that each Slater determinant appearing in $\Psi_1$ always contains two functions orthogonal with all the functions in any of the Slater determinants of $\Psi_2$, which implies that the one-body term vanishes.
Calculating the two-body term one gets
\begin{equation*}
\pscal{\Psi_1,\mathbb{V}\Psi_2}
=-\iint_{\R^3\times\R^3}V(x-y)\left|\sum_{j=1}^Mc_i\,g_{2j-1}(x) g_{2j}(y)\right|^2dx\,dy
\end{equation*}
and the result follows.
\end{proof}
\begin{corollary}[Hund's rule for $sp$ mixing]\label{cor:Hund_sp}
We have $E_{sp}({\,^3\!P_1})<E_{sp}({\,^1\!P_1})$.
\end{corollary}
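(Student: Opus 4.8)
The plan is to derive Corollary \ref{cor:Hund_sp} directly from Theorem \ref{thm:Hund} by exhibiting the two nonrelativistic configurations $\sP_{sp}$ and $\tP_{sp}$ in the form $\Psi_1\pm\Psi_2$ required by the theorem. Concretely, I would fix radial orbitals $R_0,R_1,R_2$ satisfying the normalization and orthogonality constraints \eqref{constraint_sp_1}--\eqref{constraint_sp_2}, and unwind the explicit expressions for $\sP_{sp}(R_0,R_1,R_2)$ and $\tP_{sp}(R_0,R_1,R_2)$ given in Appendix A (with $J_z=1$). The closed shell $1s_{1/2}^2$ contributes a doubly occupied spatial orbital $f_1=R_0$ (times the angular/spin $s$-functions), and the open shell $2s_{1/2}\,2p_{1/2}$-type part is a combination of $s$ and $p$ orbitals with the two ``active'' electrons placed in $g_1,g_2$ built from $R_1$ (the $2s$ orbital, angular momentum $0$) and $R_2$ (the $2p$ orbital, with magnetic quantum numbers $m=-1,0,1$). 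The point is that both configurations are built from the \emph{same} set of spatial orbitals $\{g_j\}$ — only the spin assignment of the two open-shell electrons differs — so that, up to an overall sign and normalization, $\sP_{sp}$ is proportional to $\Psi_1-\Psi_2$ and $\tP_{sp}$ is proportional to $\Psi_1+\Psi_2$ in the notation of Theorem \ref{thm:Hund}, with $K=1$, $M$ equal to the number of distinct spatial products needed (here $M=1$ or $2$ depending on how one groups the $p_{m}$ components), and suitable coefficients $c_j$.

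Once this identification is made, I would apply Theorem \ref{thm:Hund} with $h=-\Delta/2-Z/|x|$ and $V(x-y)=|x-y|^{-1}>0$, which is exactly the nonrelativistic Hamiltonian $H$ restricted to the $4$-electron configuration; Theorem \ref{thm:Hund} then gives
\begin{equation*}
\pscal{\tP_{sp}(R_0,R_1,R_2),H\,\tP_{sp}(R_0,R_1,R_2)}<\pscal{\sP_{sp}(R_0,R_1,R_2),H\,\sP_{sp}(R_0,R_1,R_2)}
\end{equation*}
for every admissible choice of $R_0,R_1,R_2$. Taking the infimum over all such radial orbitals on the left-hand side gives $E_{sp}(\tP_1)\le\pscal{\sP_{sp},H\,\sP_{sp}}$ for every $\sP_{sp}$, and then taking the infimum on the right yields $E_{sp}(\tP_1)\le E_{sp}(\sP_1)$. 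To upgrade this to a strict inequality, I would use that the variational problem \eqref{def_min_sp_1P} for $E_{sp}(\sP_1)$ admits a minimizer $(R_0,R_1,R_2)$ (existence of Hartree-Fock-type minimizers for atoms is standard, or one can argue with a near-minimizing sequence and a uniform gap): for that particular minimizer, $\pscal{\tP_{sp},H\,\tP_{sp}}<\pscal{\sP_{sp},H\,\sP_{sp}}=E_{sp}(\sP_1)$, and since $E_{sp}(\tP_1)\le\pscal{\tP_{sp},H\,\tP_{sp}}$, the strict inequality $E_{sp}(\tP_1)<E_{sp}(\sP_1)$ follows. If one prefers to avoid invoking existence of a minimizer, the same conclusion follows by noting that the two-body term computed in the proof of Theorem \ref{thm:Hund} is \emph{uniformly} bounded away from zero along any minimizing sequence for $E_{sp}(\sP_1)$ (the orbitals cannot spread to infinity because of the attractive Coulomb term), so the energy difference does not degenerate in the limit.

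The main obstacle, and the only place demanding real care, is the bookkeeping in the first step: verifying that $\sP_{sp}(R_0,R_1,R_2)$ and $\tP_{sp}(R_0,R_1,R_2)$ really do fit the rigid template $\Psi_1\mp\Psi_2$ of Theorem \ref{thm:Hund}, with a \emph{common} family of spatial orbitals $\{g_j\}$ and coefficients $c_j$ independent of the spin labelling. This requires carefully reading off the Clebsch-Gordan / Condon-Shortley decomposition of the $J=1$, $L=1$ states into Slater determinants, checking that the closed shells $f_i^\uparrow\wedge f_i^\downarrow$ factor out cleanly, and that the relative signs work out so that flipping $\uparrow\leftrightarrow\downarrow$ on the two open-shell electrons sends the triplet combination to the singlet one (up to normalization). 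This is precisely the classical fact recorded in \cite{CondonShortley-63} and is exactly the structure that \eqref{relations_s12_p12}--\eqref{relations_s12_p32} encode; once it is in place, everything else is a direct application of the already-proven Theorem \ref{thm:Hund} plus a routine non-degeneracy argument.
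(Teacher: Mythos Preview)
Your overall strategy is right, but the ``bookkeeping'' step you flag as the main obstacle actually fails as stated, and this is not a mere matter of care: the state $\tP_{sp}(R_0,R_1,R_2)$ in the sector $J=1$, $J_z=1$ does \emph{not} fit the template $\Psi_1+\Psi_2$ of Theorem \ref{thm:Hund}. From \eqref{3Psp} one has $\tP_{sp}=\tfrac{1}{2}(L^--S^-)\tP_{2,sp}$, and the $L^-$ piece contributes the Slater determinant $s^\uparrow(R_0)\wedge s^\downarrow(R_0)\wedge s^\uparrow(R_1)\wedge p_0^\uparrow(R_2)$, in which \emph{both} open-shell electrons carry spin $\uparrow$. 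No term of this form can appear in $\Psi_1+\Psi_2$, where every summand has exactly one $\uparrow$ and one $\downarrow$ on the open-shell pair. So $\sP_{sp}$ and $\tP_{sp}$ are not built from the same family $\{g_j\}$ with merely swapped spins, and Theorem \ref{thm:Hund} cannot be applied to them directly.

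The missing idea, which the paper supplies, is to first replace $\tP_{sp}$ by another state in the same $(L,S)$-multiplet whose energy is the same but which \emph{does} have the required form. Since $H$ commutes with $\mathbf{L}$ and $\mathbf{S}$, relation \eqref{eq:simplification_2_sp} gives
\[
\pscal{\tP_{sp},H\,\tP_{sp}}=\tfrac{1}{2}\pscal{S^-\tP_{2,sp},H\,S^-\tP_{2,sp}},
\]
and $\tfrac{1}{\sqrt{2}}S^-\tP_{2,sp}=\tfrac{1}{\sqrt{2}}\,s^\uparrow(R_0)\wedge s^\downarrow(R_0)\wedge\big(s^\uparrow(R_1)\wedge p_1^\downarrow(R_2)+s^\downarrow(R_1)\wedge p_1^\uparrow(R_2)\big)$ is precisely $\tfrac{1}{\sqrt{2}}(\Psi_1+\Psi_2)$ with the same $f_1,g_1,g_2$ as in $\sP_{sp}$. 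After this substitution your argument goes through verbatim (with $M=1$), including the use of a minimizer for $E_{sp}(\sP_1)$ to secure the strict inequality.
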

\begin{proof}
Using the methods of proof of \cite{LieSim-77,Lions-87,Friesecke-03,Lewin-04a}, one can see that there exists $(R_0,R_1,R_2)$ minimizing $E_{sp}({\,^1\!P_1})$. We now choose and fix these functions. Using formula \eqref{3Psp} given in Appendix A and Property \eqref{eq:simplification_2_sp}, we see that
\begin{multline}
\pscal{\tP_{sp}(R_0,R_1,R_2)\,,\,H\big(\tP_{sp}(R_0,R_1,R_2)\big)}\\
=\frac12\pscal{S^-\big(\tP_{2,sp}(R_0,R_1,R_2)\big)\,,\,HS^-\big(\tP_{2,sp}(R_0,R_1,R_2)\big)} 
\label{simplification_S}
\end{multline}
where the state $\tP_{2,sp}$ is defined in \eqref{3P2sp}. Indeed we have precisely
$$S^-\big(\tP_{2,sp}(R_0,R_1,R_2)\big)=s^\uparrow(R_0)\wedge s^\downarrow(R_0)\wedge\bigg( s^\uparrow(R_1)\wedge p_1^\downarrow(R_2)+s^\downarrow(R_1)\wedge p_1^\uparrow(R_2)\bigg)$$
and, see \eqref{1Psp},
$$\sP_{sp}(R_0,R_1,R_2)=\frac{1}{\sqrt{2}}s^\uparrow(R_0)\wedge s^\downarrow(R_0)\wedge\bigg( s^\uparrow(R_1)\wedge p_1^\downarrow(R_2)-s^\downarrow(R_1)\wedge p_1^\uparrow(R_2)\bigg),$$
the notation being that of Appendix A.
The result follows from \eqref{simplification_S} and Theorem \ref{thm:Hund}.
\end{proof}

We now give the proof of Theorem \ref{thm:sp}:

\begin{proof}
Using formulas \eqref{relations_s12_p12}, \eqref{relations_s12_p32}, \eqref{1Psp} and \eqref{3Psp}, we deduce that any trial wavefunction for \eqref{def_min_sp} can be written:
$$\Psi=\frac{1}{\sqrt{3}}{\,^1\!P_{sp}}(R_0,R_1,-aR_2+b\sqrt{2}R_3)+\frac{1}{\sqrt{3}}{\,^3\!P_{sp}}(R_0,R_1,a\sqrt{2}R_2+bR_3).$$ 
Since $H$ commutes with $S$, the scalar product between the above two eigenfunctions of $S^2$ (corresponding to different eigenvalues) vanishes, and we get
\begin{align}
\pscal{\Psi,H\Psi}&=\frac{1}{3}\pscal{{\,^1\!P_{sp}}(R_0,R_1,-aR_2+b\sqrt{2}R_3),H{\,^1\!P_{sp}}(R_0,R_1,-aR_2+b\sqrt{2}R_3)}\nonumber\\ 
&\qquad\qquad+\frac{1}{3}\pscal{{\,^3\!P_{sp}}(R_0,R_1,a\sqrt{2}R_2+bR_3),H{\,^3\!P_{sp}}(R_0,R_1,a\sqrt{2}R_2+bR_3)}\nonumber\\ 
&\geq \frac{\norm{-aR_2+b\sqrt{2}R_3}^2}{3}E_{sp}({\,^1\!P_1})+\frac{\norm{a\sqrt{2}R_2+bR_3}^2}{3}E_{sp}({\,^3\!P_1})\nonumber\\ 
&\geq \frac{\norm{-aR_2+b\sqrt{2}R_3}^2}{3}\big\{E_{sp}({\,^1\!P_1})-E_{sp}({\,^3\!P_1})\big\}+E_{sp}({\,^3\!P_1})\geq E_{sp}({\,^3\!P_1})\label{eq:estim_sp_OK}
\end{align}
where $||\cdot ||$ denotes the $L^2(r^2\,dr)$ norm and we have used that
$$\norm{-aR_2+b\sqrt{2}R_3}^2+\norm{a\sqrt{2}R_2+bR_3}^2=3(a^2+b^2)=3.$$
By Hund's rule (Corollary \ref{cor:Hund_sp}), we have $E_{sp}({\,^1\!P_1})-E_{sp}({\,^3\!P_1})>0$, hence we get that $E_{sp}(J=1)\geq E_{sp}({\,^3\!P_1})$. Therefore there must be equality in \eqref{eq:estim_sp_OK} and it holds $aR_2=b\sqrt{2}R_3$.
Taking the square of the previous relation and using that $\int (R_2)^2=\int (R_3)^2$, we prove  the result.
\end{proof}

\section{Model with $sp$ and $pd$ configurations}\label{sec:spd}
In the previous section we have seen that the nonrelativistic limit of our model with only $sp$ configurations was ``correct". We now study in detail the model which was considered in \cite{KimParMarIndDes-98}. The idea is to add configurations by considering $d$ shells. The nonrelativistic wavefunction now takes the form:
\begin{multline}
\Psi= a\Phi(1s_{1/2}^2\, 2s_{1/2}\,2p_{1/2})(R_0,R_1,R_2)+ b\Phi(1s_{1/2}^2\, 2s_{1/2}\,2p_{3/2})(R_0,R_1,R_3)\\
+
c \Phi(1s_{1/2}^2\, 2p_{1/2}\,3d_{3/2})(R_0,R_2,R_4)+d\Phi(1s_{1/2}^2\, 2p_{3/2}\,3d_{3/2})(R_0,R_3,R_4)\\
+e\Phi(1s_{1/2}^2\, 2p_{3/2}\,3d_{5/2})(R_0,R_3,R_5).
\label{form_wavefn_sp_pd}
\end{multline}
Here functions $R_i$ are  accounting for the radial part of each shell orbital, which are normalized like in \eqref{constraint_sp_2}. Only $R_0$ and $R_1$ have to be orthogonal. As before one can first consider the relativistic model with $4$-component wavefunctions and pass to the nonrelativistic limit $c\to\ii$. One obtains the above form \eqref{form_wavefn_sp_pd} of the wavefunction. The first two functions of \eqref{form_wavefn_sp_pd} have already been defined in \eqref{relations_s12_p12} and \eqref{relations_s12_p32}. The other three functions are given by (see again \cite{CondonShortley-63} page 294)
\begin{multline}
\Phi(1s_{1/2}^2\, 2p_{1/2}\,3d_{3/2})(R_0,R_2,R_4)
:=\frac{1}{\sqrt{3}}{\,^1\!P_{pd}}(R_0,R_2,R_4)-\frac{1}{\sqrt{6}}{\,^3\!P_{pd}}(R_0,R_2,R_4)\\ 
+\frac{1}{\sqrt{2}}{\,^3D_{pd}}(R_0,R_2,R_4),
\label{relations_p12_d32}
\end{multline}
\begin{multline}
\Phi(1s_{1/2}^2\, 2p_{3/2}\,3d_{3/2})(R_0,R_3,R_4)
:=-\frac{1}{\sqrt{15}}{\,^1\!P_{pd}}(R_0,R_3,R_4)+\frac{2\sqrt{2}}{\sqrt{15}}{\,^3\!P_{pd}}(R_0,R_3,R_4)\\ 
+\frac{\sqrt{2}}{\sqrt{5}}{\,^3D_{pd}}(R_0,R_3,R_4),
\label{relations_p32_d32}
\end{multline}
\begin{multline}
\Phi(1s_{1/2}^2\, 2p_{3/2}\,3d_{5/2})(R_0,R_3,R_5)
:=\frac{\sqrt{3}}{\sqrt{5}}{\,^1\!P_{pd}}(R_0,R_3,R_5)+\frac{\sqrt{3}}{\sqrt{10}}{\,^3\!P_{pd}}(R_0,R_3,R_5)\\ 
-\frac{1}{\sqrt{10}}{\,^3D_{pd}}(R_0,R_3,R_5).
\label{relations_p32_d52}
\end{multline}
The above nonrelativistic configurations satisfy, for $k=0,1$
$$\left(S^2-k(k+1)\right)\,^{2k+1}\!P_{pd}=\left(L^2-2\right)\,^{2k+1}\!P_{pd}=\left(J^2-2\right)\,^{2k+1}\!P_{pd}=0,$$ 
$$\left(S^2-2\right)\,^{3}\!D_{pd}=\left(L^2-6\right)\,^{3}\!D_{pd}=\left(J^2-2\right)\,^{3}\!D_{pd}=0.$$ 
Formulas for these nonrelativistic functions of the $pd$ shells with $J_z=1$ are given in Appendix A.

\subsection{Eigenfunctions of $L^2$ and $S^2$}
Among functions of the form \eqref{form_wavefn_sp_pd}, we will be interested in the ones which are eigenfunctions of $S^2$ and $L^2$, i.e. the ones which have the symmetry which is imposed in nonrelativistic calculations. Note that, contrarily to the $sp$ mixing studied in the previous section, our wavefunction is \emph{a priori} not even an eigenfunction of $L^2$.

We will write $\Psi\in \,^1\!P_1$ when $\Psi$ is a linear combination of configurations $\,^1\!P_{sp}$ and $\,^1\!P_{pd}$. We use similar notations for $\Psi\in\,^3\!P_1$ and $\Psi\in\,^3D_1$. The following result will be crucial in our analysis:
\begin{theorem}[Eigenvectors of $L^2$ and $S^2$ of the form \eqref{form_wavefn_sp_pd}]\label{thm:symmetric_fn} Let $\Psi$ a normalized wavefunction of the form \eqref{form_wavefn_sp_pd}.
\smallskip
\begin{enumerate}
\item We have $\Psi\in \,^3\!P_1$ if and only if there exists $\varepsilon,\varepsilon' =\pm1$ such that $R_2=\varepsilon \,R_3$, $R_4=\varepsilon'\,R_5$, $a=\varepsilon \sqrt{2}\,b$, $3d=4\varepsilon'e$ and $4c=-\varepsilon \sqrt{5}\,d$.
In this case
\begin{equation}
\Psi = \frac{a}{\sqrt{3}}\left(\sqrt{2} + \frac{1}{\sqrt{2}} \right)\,{}^3\!P_{sp}\,(R_0,R_1, R_2) - c\sqrt{6}\,\,{}^3\!P_{pd}\,(R_0,R_2, R_4),
\label{IND-triplet}
\end{equation}
\item We have $\Psi\in \,^1\!P_1$ if and only if there exists $\varepsilon,\varepsilon' =\pm1$ such that $R_2=\varepsilon \,R_3$, $R_4=\varepsilon'\,R_5$ $a\sqrt{2}=-\varepsilon \,b$, $3d=-\varepsilon'e$ and $c=-\varepsilon \sqrt{5}\,d$.
In this case
\begin{equation}
\Psi = -a\sqrt{3}\,{}^1\!P_{sp}(R_0,R_1, R_2) + c\,\sqrt{3}\,\,{}^1\!P_{pd}(R_0,R_2, R_4).
\label{IND-singlet}
\end{equation}
\item We have $\Psi\in \,^3D_1$ if and only if there exist $\varepsilon,\varepsilon',\varepsilon'' =\pm1$ such that $a=b=0$, $R_2=\varepsilon R_3$, $R_4=\varepsilon'\,R_5$, $c=\varepsilon''/\sqrt{2}$, $d=\sqrt{2/5}\,\varepsilon\varepsilon''$, $e=-\sqrt{1/10}\,\varepsilon\varepsilon'\varepsilon''$.
In this case
\begin{equation}
\Psi = \varepsilon''\;{}^3\!D_{pd}(R_0,R_2, R_4).
\label{IND-D}
\end{equation}
\end{enumerate}
\end{theorem}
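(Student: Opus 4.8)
The plan is to turn the question into linear algebra by expanding $\Psi$ in the nonrelativistic configuration basis. First I would substitute the definitions \eqref{relations_s12_p12}--\eqref{relations_p32_d52} into \eqref{form_wavefn_sp_pd} and collect terms; since each configuration $\,^{2S+1}\!L_{\bullet}$ is linear in each of its radial arguments, this gives $\Psi=\Psi_{{}^1\!P}+\Psi_{{}^3\!P}+\Psi_{{}^3D}$ with
\begin{align*}
\Psi_{{}^1\!P} &= \tfrac{1}{\sqrt3}\,{}^1\!P_{sp}(R_0,R_1,-aR_2+\sqrt2\,bR_3)\\
&\quad + \tfrac{c}{\sqrt3}\,{}^1\!P_{pd}(R_0,R_2,R_4) - \tfrac{d}{\sqrt{15}}\,{}^1\!P_{pd}(R_0,R_3,R_4) + \tfrac{\sqrt3}{\sqrt5}\,e\,{}^1\!P_{pd}(R_0,R_3,R_5),\\
\Psi_{{}^3\!P} &= \tfrac{1}{\sqrt3}\,{}^3\!P_{sp}(R_0,R_1,\sqrt2\,aR_2+bR_3)\\
&\quad - \tfrac{c}{\sqrt6}\,{}^3\!P_{pd}(R_0,R_2,R_4) + \tfrac{2\sqrt2}{\sqrt{15}}\,d\,{}^3\!P_{pd}(R_0,R_3,R_4) + \tfrac{\sqrt3}{\sqrt{10}}\,e\,{}^3\!P_{pd}(R_0,R_3,R_5),\\
\Psi_{{}^3D} &= \tfrac{c}{\sqrt2}\,{}^3\!D_{pd}(R_0,R_2,R_4) + \tfrac{\sqrt2}{\sqrt5}\,d\,{}^3\!D_{pd}(R_0,R_3,R_4) - \tfrac{1}{\sqrt{10}}\,e\,{}^3\!D_{pd}(R_0,R_3,R_5).
\end{align*}
These three pieces lie in the joint eigenspaces of $(L^2,S^2)$ for the distinct eigenvalue pairs $(2,0)$, $(2,2)$, $(6,2)$, hence (eigenspaces of self-adjoint operators for distinct eigenvalues) are mutually orthogonal. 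Therefore $\Psi\in\,^3\!P_1$ iff $\Psi_{{}^1\!P}=\Psi_{{}^3D}=0$, and analogously for the other two classes; moreover the surviving piece will directly produce \eqref{IND-triplet}, \eqref{IND-singlet} or \eqref{IND-D}.

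The second ingredient I need is a package of injectivity statements: (i) for fixed $R_0,R_1$ the map $R\mapsto\,^{2k+1}\!P_{sp}(R_0,R_1,R)$ is injective; (ii) for fixed $R_0$ the bilinear maps $(u,v)\mapsto\,^{2k+1}\!P_{pd}(R_0,u,v)$ and $(u,v)\mapsto\,^3\!D_{pd}(R_0,u,v)$ extend to injective linear maps on the tensor product $V\otimes W$, where $V$, $W$ are the spans of the $p$-type and $d$-type radial functions occurring; and (iii) within each fixed $(L^2,S^2)$-sector the range of the $sp$-configurations is orthogonal to that of the $pd$-configurations. All of these follow from two elementary facts: Slater determinants built from orthonormal orbitals are orthonormal, and two orbitals whose angular parts are spherical harmonics of different degree are orthogonal regardless of their radial parts (together with the fact, visible from the explicit formulas in Appendix A, that each $\,^{2S+1}\!L_{\bullet}$ is a non-trivial combination of such determinants). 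For (iii) one notes that every Slater determinant in an $sp$-configuration contains an $\ell=0$ ``$2s$'' orbital which cannot be matched against the $\ell=2$ orbital present in every $pd$-determinant, so all cross overlaps vanish.

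Using (i) and (iii), $\Psi_{{}^1\!P}=0$ is equivalent to $-aR_2+\sqrt2\,bR_3=0$ \emph{and} the vanishing of the $pd$-part, which by (ii) amounts to the tensor identity $\tfrac{c}{\sqrt3}R_2\otimes R_4-\tfrac{d}{\sqrt{15}}R_3\otimes R_4+\tfrac{\sqrt3}{\sqrt5}e\,R_3\otimes R_5=0$; $\Psi_{{}^3\!P}=0$ and $\Psi_{{}^3D}=0$ give analogous conditions (with no $sp$-condition in the $\,^3D$ sector, there being no $\,^3\!D_{sp}$). Since the $R_i$ are normalized, each of the two $sp$-equations forces either $a=b=0$ or $R_2=\varepsilon R_3$ together with one relation between $a$ and $b$; imposing both (the $\,^3D_1$ case) forces $a=b=0$. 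After substituting $R_3=\varepsilon R_2$ and comparing coefficients of elementary tensors, a short case analysis on whether $R_4,R_5$ are proportional shows that, unless $c=d=e=0$, one must have $R_5=\varepsilon'R_4$, upon which the tensor identities become a rank-two scalar system whose one-parameter family of solutions is precisely $3d=4\varepsilon'e,\ 4c=-\varepsilon\sqrt5\,d$ for $\,^3\!P_1$, $3d=-\varepsilon'e,\ c=-\varepsilon\sqrt5\,d$ for $\,^1\!P_1$, and — after fixing the scale by $\norm{\Psi}=1$ — the stated explicit values of $c,d,e$ for $\,^3D_1$. The exceptional case $c=d=e=0$ can occur only for $\,^1\!P_1$ and $\,^3\!P_1$, where $\Psi$ is then a pure $sp$ wavefunction and $R_4,R_5$ do not appear; there the relation $R_4=\varepsilon'R_5$ is read as a choice of representative in the (non-unique) form \eqref{form_wavefn_sp_pd}.

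For the converse direction I would simply substitute the claimed relations back into the three formulas above: two of the pieces vanish identically by bilinearity and the third collapses to a single configuration, which gives \eqref{IND-triplet}, \eqref{IND-singlet} or \eqref{IND-D} and in particular exhibits $\Psi$ as an eigenfunction of $L^2$ and $S^2$; this is a routine computation with the numerical coefficients of \eqref{relations_s12_p12}--\eqref{relations_p32_d52}. I expect the main obstacle to be the bookkeeping in the third step: one must check that in each ``mixed'' sub-case (exactly one of the pairs $\{R_2,R_3\}$, $\{R_4,R_5\}$ proportional) the two tensor identities are incompatible with $\Psi\neq0$, so that proportionality of \emph{both} pairs is genuinely forced, and one must track the signs $\varepsilon,\varepsilon',\varepsilon''$ consistently throughout.
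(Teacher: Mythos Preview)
Your proposal is correct and follows essentially the same route as the paper: both expand $\Psi$ in the nonrelativistic $(L^2,S^2)$ configuration basis (your three pieces $\Psi_{{}^1\!P}+\Psi_{{}^3\!P}+\Psi_{{}^3D}$ are exactly the paper's formula \eqref{formula_Psi}), invoke orthogonality of the symmetry sectors to reduce to the vanishing of two pieces, and then solve the resulting linear/bilinear constraints on the radial functions and mixing coefficients. Your tensor-product formulation of the $pd$ conditions is just a repackaging of the paper's device of splitting $R_5$ along $R_4$ and its orthogonal complement, and you are in fact slightly more explicit than the paper about the injectivity lemmas and the degenerate case $c=d=e=0$.
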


\begin{proof}
Using formulas \eqref{form_wavefn_sp_pd}--\eqref{relations_p32_d52}, we get (for the sake of clarity, we omit to mention $R_0$ which appears in all configurations)
\begin{multline}
\Psi= \frac{1}{\sqrt{3}}\;{}^3\!P_{sp}(R_1, a\sqrt{2} R_2+b R_3)
+\frac1{\sqrt{30}}\left(
\,{}^3\!P_{pd}(-c\sqrt{5}R_2+4d R_3, R_4) +
\,{}^3\!P_{pd}(3e R_3, R_5)\right)\\
+\frac{1}{\sqrt{3}}\;{}^1\!P_{sp}(R_1,-a R_2+b\sqrt{2}R_3)+
\frac1{\sqrt{30}}\left(
\,{}^1\!P_{pd}(c\sqrt{10}R_2-d\sqrt{2}R_3, R_4) +
\,{}^1\!P_{pd}(3e\sqrt{2}R_3, R_5)\right)\\
+ \frac1{\sqrt{30}}\left(
\,{}^3D_{pd}(c\sqrt{15}R_2+2\sqrt{3}d R_3 , R_4)+\;{}^3D_{pd}(-e\sqrt{3}R_3,  R_5)
\right).
\label{formula_Psi}
\end{multline}
Hence, using the orthogonality properties of the different configurations, we see that $\Psi\in\,^3\!P_1$ if and only if
$$\left\{\begin{array}{rl}
-a \,R_2+b\sqrt{2}\,\,R_3&=0\,,\\
c\sqrt{5}R_2+(3e -d)\,R_3&= 3e(1-(R_4,R_5))\,R_3\,,\\
c\sqrt{5}\,R_2+(2d -e)\,R_3 &= -e (1-(R_4,R_5))\,R_3\,,\\
e (R_5-(R_4,R_5)R_4) &=0\,.
\end{array}\right.$$
The last equation tells us that either $e=0$ or $R_4=\epsilon'R_5$ with $\epsilon'=\pm1$.
If $e=0$ then the second and third equations imply that $c=d=0$ and only the first equation remains.
If $R_4=R_5$, then the system reduces to
$$\left\{\begin{array}{rl}
-a \,R_2+b\sqrt{2}\,\,R_3&=0\,,\\
c\sqrt{5}R_2+(3e -d)\,R_3&=0\,,\\
c\sqrt{5}\,R_2+(2d -e) \,R_3 &= 0\,,\\
R_4&=R_5\,.
\end{array}\right.$$
The second and third equations then imply that $3d=4e$. The rest follows from the normalization of $R_2$ and $R_3$.
The proof is similar for $\epsilon'=-1$, and for $\,^1\!P_1$ and $\,^3D_1$ states.
\end{proof}

In view of the above result, we now introduce the nonrelativistic ground state energies with $sp+pd$ mixing
\begin{equation}
E_{sp+pd}({\,^3\!P_1}):=\inf_{\Psi \text{ of the form \eqref{IND-triplet}}}\pscal{\Psi,H\Psi},
\label{def_min_sppd_3P}
\end{equation}
\begin{equation}
E_{sp+pd}({\,^1\!P_1}):=\inf_{\Psi \text{ of the form \eqref{IND-singlet}}}\pscal{\Psi,H\Psi},
\label{def_min_sppd_1P}
\end{equation}
\begin{equation}
E_{pd}({\,^3\!D_1}):=\inf_{\Psi \text{ of the form \eqref{IND-D}}}\pscal{\Psi,H\Psi}.
\label{def_min_sppd_3D}
\end{equation}
Our goal is to compare these nonrelativistic energies with the one obtained in the nonrelativistic limit:
\begin{equation}
E_{sp+pd}(J=1):=\inf_{\Psi \text{ of the form \eqref{form_wavefn_sp_pd}}}\pscal{\Psi,H\Psi}.
\label{def_min_sp_pd}
\end{equation}
By definition, we of course have
\begin{equation}
\boxed{E_{sp+pd}(J=1)\leq \min\bigg\{E_{sp+pd}({\,^3\!P_1})\,,\,E_{sp+pd}({\,^1\!P_1})\,,\,E_{sp+pd}({\,^3\!D_1})\bigg\}.} 
\label{ineg_var_pb}
\end{equation}
The phenomenon which was observed by Kim \emph{et al.} in \cite{KimParMarIndDes-98} was precisely that `the' ground state for $E_{sp+pd}(J=1)$ (i.e. the nonrelativistic limit of `the' MCDF ground state) was \emph{not} an eigenfunction of $L$ and $S$, hence it was not a solution of any of the problems $E_{sp+pd}({\,^3\!P_1})$, $E_{sp+pd}({\,^1\!P_1})$ or $E_{sp+pd}({\,^3\!D_1})$. This means that there must be a strict inequality $<$ in \eqref{ineg_var_pb}. This relaxation phenomenon is itself the reason for the deficiency of the nonrelativistic limit of MCDF theory. It is a typical nonlinear phenomenon.

We will now study with more details the three minimizers for \eqref{def_min_sppd_1P}, \eqref{def_min_sppd_3D} and \eqref{def_min_sppd_3P}. We will in particular give some conditions under which \eqref{ineg_var_pb} is a strict inequality, and we will check these conditions numerically in Section \ref{sec:num}.

\subsection{Study of the $\sP_1$ state}
In this section, we prove that `the' $\sP_1$ state minimizing $E_{sp+pd}({\,^1\!P_1})$ is \emph{never} a ground state for $J=1$, although it is always a stationary point of the associated energy functional.
\begin{theorem}[The $\sP_1$ state]\label{thm:1P1} Let $\Psi\in\sP_1$ be a function of the form \eqref{IND-singlet} minimizing $E_{sp+pd}({\,^1\!P_1})$ defined in \eqref{def_min_sppd_1P}.
\begin{enumerate}
\item The associated mixing coefficients and orbitals $(a,...,e,R_0,...,R_5)$ satisfying the relations of Theorem \ref{thm:symmetric_fn} (Assertion $2$) provide a \emph{stationnary point} of the total energy functional
$$(a,...,e,R_0,...,R_5)\mapsto \pscal{\Psi,H\Psi}$$
where $\Psi$ takes the form \eqref{form_wavefn_sp_pd}.
\item However, $\Psi$ is \emph{never} a local minimum of this functional.
\end{enumerate}
\end{theorem}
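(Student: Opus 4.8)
The plan is to prove the two assertions separately, exploiting the decomposition \eqref{formula_Psi} of a general trial function in the $J=1$ set into its $\,^1\!P$, $\,^3\!P$ and $\,^3D$ components.

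For assertion (1), I would argue that the constrained minimizer among $\sP_1$-states is automatically a stationary point of the full $J=1$ functional, by a symmetry/selection-rule argument. Let $\Psi_0 \in \sP_1$ be the minimizer of \eqref{def_min_sppd_1P}, with orbitals and coefficients satisfying Theorem \ref{thm:symmetric_fn}(2). Parametrize the full problem \eqref{form_wavefn_sp_pd} by the $11$ variables $(a,\dots,e,R_0,\dots,R_5)$ subject to the normalization/orthogonality constraints, and split the tangent space at $\Psi_0$ into ``symmetric'' directions (those preserving the relations $R_2=\varepsilon R_3$, $R_4=\varepsilon' R_5$, $a\sqrt2=-\varepsilon b$, etc., i.e. staying inside $\sP_1$) and ``symmetry-breaking'' directions. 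Along symmetric directions the derivative of $\pscal{\Psi,H\Psi}$ vanishes because $\Psi_0$ minimizes within $\sP_1$ (taking care of the Lagrange multipliers from the constraints). Along symmetry-breaking directions, the first-order variation of $\pscal{\Psi,H\Psi}$ is $2\,\Re\pscal{\Psi_0,H\,\delta\Psi}$; writing $\delta\Psi$ via \eqref{formula_Psi}, its $\,^1\!P$-component is killed by the minimality of $\Psi_0$ in $\sP_1$ (same argument as symmetric directions), and the $\,^3\!P$- and $\,^3D$-components of $\delta\Psi$ give zero against $\Psi_0\in\sP_1$ because $H$ commutes with $L^2$ and $S^2$, so $\pscal{\Psi_0,H\,(\text{$^3P$ or $^3D$ part})}=0$. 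Hence the full gradient (projected onto the constraint manifold) vanishes at $\Psi_0$: it is a stationary point.

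For assertion (2), I would construct an explicit symmetry-breaking perturbation that lowers the energy to second order. The idea, as in the $sp$ case and following \cite{KimParMarIndDes-98}, is that although the first variation vanishes, the Hessian in the symmetry-breaking directions is not positive: one can ``rotate'' a little bit of $\,^3\!P_{pd}$ (or a relative change in the $2p_{1/2}$ vs $2p_{3/2}$, $3d_{3/2}$ vs $3d_{5/2}$ radial functions) into $\Psi_0$. Concretely I would take $\Psi_t = \Psi_0 + t\,\chi$ where $\chi$ is a normalized $\,^3\!P_{pd}$-type variation built from the \emph{same} radial orbitals $R_0,R_2,R_4$ appearing in $\Psi_0$, chosen orthogonal to $\Psi_0$ and to its $\sP_1$-tangent space; then $\pscal{\Psi_t,H\Psi_t} = \pscal{\Psi_0,H\Psi_0} + t^2\big(\pscal{\chi,H\chi}-E_{sp+pd}(\sP_1)\big) + o(t^2)$ after renormalization, and the sign of the bracket is governed by the interaction term. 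Using the explicit Appendix A formulas for $\,^1\!P_{pd}$, $\,^3\!P_{pd}$ and the same computation that powers Hund's rule (Theorem \ref{thm:Hund}), the off-diagonal matrix element $\pscal{\Psi_0,H\chi}$ vanishes but $\pscal{\chi,H\chi}-E_{sp+pd}(\sP_1)$ becomes, after the algebra, a manifestly negative Coulomb-type double integral of the form $-\iint V(x-y)|\,\cdots\,|^2$, exactly as in the proof of Theorem \ref{thm:Hund}; this forces $\pscal{\Psi_t,H\Psi_t}<\pscal{\Psi_0,H\Psi_0}$ for small $t\neq0$, so $\Psi_0$ is not a local minimum.

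The main obstacle I anticipate is purely computational bookkeeping rather than conceptual: one must correctly expand $\Psi_0+t\chi$ using \eqref{formula_Psi} and the Appendix A Slater-determinant expressions, keep track of all the normalization constraints \eqref{constraint_sp_1}--\eqref{constraint_sp_3} and the associated Lagrange multipliers when projecting the gradient and Hessian onto the constraint manifold, and verify that the surviving second-order term really collapses to a single negative definite Coulomb integral. Choosing the perturbation $\chi$ wisely — so that the one-body part of $H$ and all ``wrong-symmetry'' cross terms drop out by orthogonality, leaving only the sign-definite two-body piece — is the delicate point, and this is precisely where the Hund's-rule mechanism of Theorem \ref{thm:Hund} has to be invoked in the right guise.
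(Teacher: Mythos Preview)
Your overall architecture is right, but both parts contain a real gap that is not merely bookkeeping.

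\medskip

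\textbf{Part (1).} The step ``its $\,^1\!P$-component is killed by the minimality of $\Psi_0$ in $\sP_1$'' is the heart of the matter and is \emph{not} a pure symmetry argument. Minimality within $\sP_1$ only controls variations in which the $p$-orbital of $\sP_{sp}$ and the $p$-orbital of $\sP_{pd}$ are moved \emph{together} (they share the same $R_2$ in \eqref{IND-singlet}). When you vary $R_2$ and $R_3$ independently in the full problem, the $\sP$-projection of $\delta\Psi$ has the form $\alpha\,\sP_{sp}(R_0,R_1,\delta R)+\beta\,\sP_{pd}(R_0,\delta R,R_4)$ with a ratio $\alpha:\beta$ that \emph{a priori} depends on whether you moved $R_2$ or $R_3$. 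Only if both ratios coincide with the ratio $-a:c$ coming from $\Psi_0$ does minimality in $\sP_1$ kill them. The paper checks this explicitly: it is the content of \eqref{happy1}--\eqref{happy2} and Remark~\ref{rmk:occ1}, and it holds precisely because the $p_{1/2}$ versus $p_{3/2}$ occupation-number ratios in $\sP_{sp}$ and $\sP_{pd}$ happen to agree. This is a numerical coincidence specific to $\sP_1$; the analogous matrix \eqref{matrix_3P1} for $\tP_1$ is invertible, which is exactly why Theorem~\ref{thm:3P1} says the $\tP_1$ minimizer is \emph{not} stationary. Your argument as written would apply equally to $\tP_1$ and therefore cannot be complete.

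\medskip

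\textbf{Part (2).} Your instinct to perturb into the triplet sector and invoke Hund's rule is correct, but the perturbation $\chi$ you propose---a pure $\tP_{pd}(R_0,R_2,R_4)$ state built only from $R_0,R_2,R_4$---does not give what you claim. Hund's rule yields $\cE(\tP_{pd})<\cE(\sP_{pd})$ for the \emph{same} orbitals, but $E_{sp+pd}(\sP_1)=\cE(\Psi_0)\le\cE(\sP_{pd})$ since $\Psi_0$ already has $sp$-mixing; the inequalities chain the wrong way. The paper instead takes $\Psi'=-a\sqrt3\,\tP_{sp}(R_0,R_1,R_2)+c\sqrt3\,\tP_{pd}(R_0,R_2,R_4)$, i.e.\ the exact triplet mirror of $\Psi_0$ with the \emph{same} mixing coefficients and all four orbitals including $R_1$. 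Then Theorem~\ref{thm:Hund} (applied termwise with $g_1,\dots,g_6$ chosen as in the proof) gives $\cE(\Psi')<\cE(\Psi_0)$ directly, and the coefficient rotation $\sqrt{1-t^2}\,(a,\dots,e)+t\,(a',\dots,e')$ realizes $\sqrt{1-t^2}\,\Psi_0+t\,\Psi'$ as an admissible curve in the full variational set.
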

\begin{proof}
We first show that $\Psi$ is a critical point of the total energy functional. For simplicity we assume that $c\neq0$, the proof being the same otherwise. We also assume for simplicity that $\epsilon=\epsilon'=1$, hence $R_2=R_3$ and $R_4=R_5$. We have to consider both variations with respect to mixing coefficients, and to orbitals. The vanishing of the variation with respect to the mixing coefficients $(a,...,e)$ is a simple consequence of the fact that there is no overlap between states belonging to different symmetry spaces,
\begin{equation*}
\pscal{H\Psi,\tP_{sp}(...)}=\pscal{H\Psi,\tP_{pd}(...)}=\pscal{H\Psi,\tD_{pd}(...)}=0, 
\end{equation*}
for any radial functions in the corresponding states.

We now turn to the variation with respect to orbitals. By extremality of the singlet function $\Psi$ among $\sP_1$ states and due to the constraints on $R_0,...,R_5$, we have\footnote{The derivatives appearing below are the coordinates of the gradient of the energy with respect to the scalar product of $L^2([0,\ii),r^2\,dr)$.}
\begin{equation}\label{deriv01-simplet}
\frac{\partial{\mathcal E}(\Psi)}{\partial R_i}\in\text{span}(R_0,R_1) \;\mbox{ for }\; i=0,1\,,
\end{equation}
\begin{equation}\label{deriv23-simplet}
\frac{\partial{\mathcal E}(\Psi)}{\partial R_2}+\frac{\partial{\mathcal E}(\Psi)}{\partial R_3}{{|_{R_3=R_2}}}\in \text{span}(R_2)
\end{equation}
and
\begin{equation}\label{deriv45-simplet}
\frac{\partial{\mathcal E}(\Psi)}{\partial R_4}+\frac{\partial{\mathcal E}(\Psi)}{\partial R_5}{{|_{R_5=R_4}}}\in\text{span}(R_4).
\end{equation}

Using  $\epsilon=1,\, a\,\sqrt{2}=- b$, $3d=-e$ and $c=-\sqrt{5}\,d$ as given by Theorem \ref{thm:symmetric_fn}, we see that at $\Psi$ we have
\begin{equation}\label{happy1}\frac{\partial{\mathcal E}(\Psi)}{\partial R_2}=\frac12\,\frac{\partial{\mathcal E}(\Psi)}{\partial R_3}{{|_{R_3=R_2}}}\,,\end{equation}
and
\begin{equation}\label{happy2}\frac{\partial{\mathcal E}(\Psi)}{\partial R_4}=\frac23\,\frac{\partial{\mathcal E}(\Psi)}{\partial R_5}{{|_{R_5=R_4}}}\,,
\end{equation}
From (\ref{deriv23-simplet})-(\ref{happy1}) and (\ref{deriv45-simplet})-(\ref{happy2}), we find
\begin{equation}
\label{deriv05-simplet}\frac{\partial{\mathcal E}(\Psi)}{\partial R_i}\in\text{span}(R_i) \;\mbox{ for }\; i=2,...,5
\end{equation}
which ends the proof of the criticality of $\Psi$.

\begin{remark}\label{rmk:occ1}
The exceptional relation \eqref{happy1} holds true because, as was noticed first in \cite{KimParMarIndDes-98}, for the singlet state the ratio between the occupation numbers of the $p_{1/2}$ function $R_2$ and the occupation number of the $p_{3/2}$ function $R_3$ in the $\sP_{sp}$ state, is the same as the corresponding ratio for the $\sP_{pd}$ state. Let us explain this with more details. Considering a variation $\delta R\in(R_2=R_3)^\perp$, we find that the variations of the total energy functional are, using \eqref{formula_Psi},
$$\frac{\partial{\mathcal E}(\Psi)}{\partial R_2}(\delta R)=2\pscal{H\Psi\, ,\, \left(\frac{-a}{\sqrt{3}}\sP_{sp}(R_1,\delta R)+\frac{c\sqrt{10}}{\sqrt{30}}\sP_{pd}(\delta R,R_4)\right)},$$
$$\frac{\partial{\mathcal E}(\Psi)}{\partial R_2}(\delta R)=2\pscal{H\Psi\, ,\, \left(\frac{b\sqrt{2}}{\sqrt{3}}\sP_{sp}(R_1,\delta R)+\frac{-d\sqrt{2}+3e\sqrt{2}}{\sqrt{30}}\sP_{pd}(\delta R,R_4)\right)}.$$
When the matrix
$$\left(\begin{matrix}
\frac{-a}{\sqrt{3}} & \frac{c\sqrt{10}}{\sqrt{30}}\\
\frac{b\sqrt{2}}{\sqrt{3}} & \frac{-d\sqrt{2}+3e\sqrt{2}}{\sqrt{30}}
\end{matrix}\right)$$
is \emph{not} invertible, its columns are colinear and we have $\frac{\partial{\mathcal E}(\Psi)}{\partial R_2}(\delta R)=
k\,\frac{\partial{\mathcal E}(\Psi)}{\partial R_3}(\delta R)$ for some $k$ ($k=1/2$ in our case). Hence
$$\frac{\partial{\mathcal E}(\Psi)}{\partial R_2}(\delta R)+\frac{\partial{\mathcal E}(\Psi)}{\partial R_3}(\delta R)=0\Longleftrightarrow \left\{\begin{array}{rl}
\displaystyle\frac{\partial{\mathcal E}(\Psi)}{\partial R_2}(\delta R)&=0,\\
\displaystyle\frac{\partial{\mathcal E}(\Psi)}{\partial R_3}(\delta R)&=0,\\
\end{array}\right.$$
as we want. This holds true when
$$\frac{\frac{-a}{\sqrt{3}}}{\frac{b\sqrt{2}}{\sqrt{3}}}=\frac{\frac{c\sqrt{10}}{\sqrt{30}}}{\frac{-d\sqrt{2}+3e\sqrt{2}}{\sqrt{30}}}.$$ 
Using the relations between $a,...,e$ provided by Theorem \ref{thm:symmetric_fn}, we see that the above equality reduces to (when $a,c\neq0$)
$$\frac{\frac13}{\frac23}=\frac{\frac{10}{30}}{\frac{20}{30}}$$
which is precisely the ratio between the occupation numbers as mentioned before.
The argument is the same for $R_4=R_5$.
\end{remark}

We now turn to the proof that $\Psi$ is never a local minimum, which will simply follow from Hund's rule. Note that we can write the relations between $a,b,...,e$ as
$$\left(\begin{matrix} a\\ b\end{matrix}\right)=-a\sqrt{3}\,U_{sp}^*\left(\begin{matrix} 0\\ 1\end{matrix}\right),\qquad\left(\begin{matrix} c\\ d\\ e\end{matrix}\right)=c\sqrt{3}\,U_{pd}^*\left(\begin{matrix} 0\\ 0\\1\end{matrix}\right)$$
where $U_{sp}$ and $U_{pd}$ are the Condon-Shortley unitary matrices \cite{CondonShortley-63}
$$U_{sp}=\frac{1}{\sqrt{3}}\left(\begin{matrix} \sqrt{2} & 1\\ -1&\sqrt{2}\end{matrix}\right),\qquad U_{pd}=\frac{1}{\sqrt{30}}\left(\begin{matrix} -\sqrt{3} & 3 & 3\sqrt{2}\\ 2\sqrt{3} & 4 & -\sqrt{2}\\ \sqrt{15} & -\sqrt{5} & \sqrt{10}\end{matrix}\right).$$
Now we define the following new mixing coefficients
$$\left(\begin{matrix} a'\\ b'\end{matrix}\right)=-a\sqrt{3}\,U_{sp}^*\left(\begin{matrix} 1\\ 0\end{matrix}\right),\qquad\left(\begin{matrix} c'\\ d'\\ e'\end{matrix}\right)=c\sqrt{3}\,U_{pd}^*\left(\begin{matrix} 0\\ 1\\0\end{matrix}\right)$$
and note that by construction $(a',...,e')$ is orthogonal to $(a,...,e)$ for the scalar product of $\R^5$. Also we have
\begin{align*}
\Psi'&:=\Psi(a',b',c',d',e',R_0,R_1,R_2=R_3,R_4=R_5)\\
&=-a\sqrt{3}\,{}^3\!P_{sp}(R_0,R_1, R_2) + c\,\sqrt{3}\,\,{}^3\!P_{pd}(R_0,R_2, R_4),
\end{align*}
i.e. it takes exactly the same form as $\Psi$ but with triplet states instead of singlet states. Now we vary the mixing coefficients as follows $\sqrt{1-t^2}(a,...,e)+t(a',...,e')$, which results into a variation for the wavefunction of the form $\sqrt{1-t^2}\Psi+t\Psi'$. Calculating the energy of this new wavefunction we find
$$\cE(\sqrt{1-t^2}\Psi+t\Psi')=\cE(\Psi)+t^2\left(\cE(\Psi')-\cE(\Psi)\right).$$ 
Note that there is no first order term since $\Psi$ is a stationary state as shown before (or simply because $\Psi$ and $\Psi'$ belong to different symmetry spaces, hence $\pscal{H\Psi,\Psi'}=0$).

Now we claim that $\cE(\Psi')<\cE(\Psi)$, which will clearly imply that $\Psi$ cannot be a local minimum. By \eqref{eq:simplification_2_sp} and \eqref{eq:simplification_2_pd} in Appendix A, we know that $\cE(\Psi')=\cE( \Psi'')$ where
$$\Psi''=-\frac{a\sqrt{3}}{\sqrt{2}}S^-\tP_{2,sp}(R_0,R_1, R_2) + \frac{c\,\sqrt{3}}{\sqrt{2}}S^-\tP_{2,pd}(R_0,R_2, R_4)$$
which is the simple triplet state taking the same form as $\Psi$ but with the adequate signs reversed. The last step is to apply Theorem \ref{thm:Hund}, with the following functions: $f_1=s(R_0)$, $g_1=p_1(R_2)$,  $g_2=(c^2/20+a^2)^{-1/2}\big(c/(2\sqrt{5})d_0(R_4)+as(R_1)\big)$, $g_3=p_{-1}(R_2)$, $g_4=d_2(R_4)$, $g_5=p_0(R_2)$ and $g_6=d_1(R_4)$.
This ends the proof of Theorem \ref{thm:1P1}.
\end{proof}

\begin{remark}\label{rmk:eigenvalues}
When $R_2=\pm R_3$ and $R_4=R_5$, the problem consisting of varying only the mixing coefficient essentially reduces to that of finding the eigenvalues of the nonrelativistic Hamiltonian matrix, i.e. the matrix of $H$ in the space spanned by the 5 configurations built upon the orbitals:
$$\sP_{sp}(R_0,R_1,R_2),\ \sP_{pd}(R_0,R_2,R_4),\ \tP_{sp}(R_0,R_1,R_2),\ \tP_{pd}(R_0,R_2,R_4),\ \tD_{pd}(R_0,R_2,R_4).$$
This $5\times5$ matrix is block diagonal and its eigenvalues are:
$$\lambda_1(\,^k\!P_1)=\inf_{\alpha^2+\beta^2=1}\cE\left(\alpha\,^k\!P_{sp}(R_0,R_1,R_2)+\beta \,^k\!P_{pd}(R_0,R_2,R_4)\right)$$
$$\lambda_2(\,^k\!P_1)=\sup_{\alpha^2+\beta^2=1}\cE\left(\alpha\,^k\!P_{sp}(R_0,R_1,R_2)+\beta \,^k\!P_{pd}(R_0,R_2,R_4)\right)$$
$$\lambda(\tD_1)=\pscal{H\tD_{pd}(R_0,R_2,R_4),\tD_{pd}(R_0,R_2,R_4)}.$$
What we have used in this second part is that $\lambda_1(\,^3\!P_1)<\lambda_1(\,^1\!P_1)$, by Hund's rule (indeed, we even have that the $2\times2$ Hamiltonian matrix of $\tP_1$ states is smaller than the one of $\sP_1$ states, in the sense of quadratic forms). However although it is expected that in many cases $\lambda_1(\,^1\!P_1)<\lambda(\,^3\!D_1)$, there is no general rule: this may depend on the orbitals $R_0,...,R_5$.
\end{remark}

\subsection{Study of the $\tD_1$ state}
In this section, we prove that `the' $\tD_1$ state minimizing $E_{pd}({\,^3\!D_1})$ is also always a stationary point of the associated energy functional and we give a condition implying that it is not a local minimum.
\begin{theorem}[The $\tD_1$ state]\label{thm:3D1} Let $\Psi\in\tD_1$ be a function of the form \eqref{IND-D} minimizing $E_{pd}({\,^3\!D_1})$ defined in \eqref{def_min_sppd_3D}.
\begin{enumerate}
\item The associated mixing coefficients and orbitals $(a,...,e,R_0,...,R_5)$ given by Theorem \ref{thm:symmetric_fn} provide a \emph{stationary point} of the total energy functional
\begin{equation}
(a,...,e,R_0,...,R_5)\mapsto \pscal{\Psi,H\Psi}
\label{functional_3D}
\end{equation}
where $\Psi$ takes the form \eqref{form_wavefn_sp_pd}.
\item If moreover
\begin{equation}
\inf_{\substack{\alpha^2+\beta^2=1\\ R_1\in\{R_0\}^\perp,\ \norm{R_1 }_{L^2(r^2dr)}=1}}\cE\left(\alpha\tP_{sp}(R_0,R_1,R_2)+\beta\tP_{pd}(R_0,R_2,R_4)\right)<E_{pd}({\,^3\!D_1}), 
\label{condition_3D1}
\end{equation}
where $R_0$, $R_2$, $R_4$ are the radial functions of $\Psi$, then the stationary state $\Psi$ is not a local minimum of the functional \eqref{functional_3D}.
\end{enumerate}
\end{theorem}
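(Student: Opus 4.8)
The plan is to prove the two assertions separately, following the pattern of Theorem~\ref{thm:1P1}. Write the minimizer as $\Psi=\varepsilon''\,\tD_{pd}(R_0,R_2,R_4)$ with the orbital and coefficient relations of Theorem~\ref{thm:symmetric_fn}(3); for notational simplicity take $\varepsilon=\varepsilon'=\varepsilon''=1$, so that in the representation \eqref{form_wavefn_sp_pd} one has $a=b=0$, $R_3=R_2$, $R_5=R_4$, $(c,d,e)=(1/\sqrt2,\sqrt{2/5},-1/\sqrt{10})$, and the orbital $R_1$ (which multiplies the vanishing coefficients $a,b$) may be chosen to be \emph{any} function with $R_1\in\{R_0\}^\perp$, $\norm{R_1}_{L^2(r^2dr)}=1$.

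To prove (1), I would show that $\pscal{H\Psi,\delta\Psi}=0$ for every admissible first variation $\delta\Psi$ of the functional \eqref{functional_3D}. The key point is that $H$ commutes with $L^2$ and $S^2$ and $\Psi\in\tD_1$, so $\pscal{H\Psi,\delta\Psi}=\pscal{H\Psi,P_{\tD}\delta\Psi}$, where $P_{\tD}$ is the orthogonal projection onto the $\tD_1$ sector: only the $\tD_1$-component of $\delta\Psi$ contributes. I then run through the admissible variations using \eqref{formula_Psi}. Varying $a$ or $b$ produces a combination of $sp$-configurations, which lies in $\sP_1\oplus\tP_1$, hence $P_{\tD}\delta\Psi=0$. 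A variation $(\delta c,\delta d,\delta e)$ tangent to the coefficient sphere $a^2+\cdots+e^2=1$ (i.e. $c\,\delta c+d\,\delta d+e\,\delta e=0$) gives, by multilinearity of $\tD_{pd}$ in its $p$- and $d$-orbitals and the chosen values of $c,d,e$, $P_{\tD}\delta\Psi=(c\,\delta c+d\,\delta d+e\,\delta e)\,\tD_{pd}(R_0,R_2,R_4)=0$. Varying $R_1$ does nothing, since $\Psi$ does not depend on it. Finally, for an admissible orbital variation $\delta R_i\perp R_i$, $i\in\{0,2,3,4,5\}$, formula \eqref{formula_Psi} shows that $P_{\tD}\delta\Psi$ is a nonzero scalar multiple of the corresponding first variation of $\tD_{pd}(R_0,R_2,R_4)$ itself: of $\partial_{R_0}\tD_{pd}(R_0,R_2,R_4)[\delta R_0]$ when $i=0$, of $\tD_{pd}(R_0,g,R_4)$ with $g\perp R_2$ when $i\in\{2,3\}$ (note $\delta R_3\perp R_3=R_2$), and of $\tD_{pd}(R_0,R_2,h)$ with $h\perp R_4$ when $i\in\{4,5\}$. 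Since $(R_0,R_2,R_4)$ minimizes $E_{pd}(\tD_1)$ — with $R_0,R_2,R_4$ carrying only the normalization constraint, the sole orthogonality constraint of \eqref{form_wavefn_sp_pd} involving the inert orbital $R_1$ — the Euler--Lagrange equations for \eqref{def_min_sppd_3D} give $\pscal{H\tD_{pd}(R_0,R_2,R_4),\xi}=0$ for each such $\xi$, and stationarity follows. One should remark that, unlike in Remark~\ref{rmk:occ1}, no colinearity between occupation numbers is needed here, because $\tD_{pd}$ is the unique $pd$-configuration in the $\tD_1$ sector, so the analogue of \eqref{happy1}--\eqref{happy2} is automatic.

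For (2), the idea is to deform $\Psi$ towards a lower-energy triplet state. Since $\Psi$ minimizes, $\cE(\Psi)=E_{pd}(\tD_1)$. By \eqref{condition_3D1} I fix $\alpha,\beta\in\R$ with $\alpha^2+\beta^2=1$ and $R_1\in\{R_0\}^\perp$, $\norm{R_1}_{L^2(r^2dr)}=1$, such that $\Phi:=\alpha\,\tP_{sp}(R_0,R_1,R_2)+\beta\,\tP_{pd}(R_0,R_2,R_4)$ (with $R_0,R_2,R_4$ the orbitals of $\Psi$, and $\norm\Phi=1$ since $\tP_{sp}$ and $\tP_{pd}$ are orthonormal — they differ by replacing an $s$-orbital by a $d$-orbital) satisfies $\cE(\Phi)<E_{pd}(\tD_1)=\cE(\Psi)$. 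As $\Phi\in\tP_1$, Theorem~\ref{thm:symmetric_fn}(1) shows $\Phi$ is a wavefunction of the form \eqref{form_wavefn_sp_pd} built on the orbitals $(R_0,R_1,R_2,R_2,R_4,R_4)$ — exactly those used above to represent $\Psi$ — with some coefficients $(a',\dots,e')$ of unit norm. For small $t$ I then consider
$$\Psi_t:=\sqrt{1-t^2}\,\Psi+t\,\Phi,$$
which is the form-\eqref{form_wavefn_sp_pd} wavefunction on those orbitals with coefficient vector $\sqrt{1-t^2}\,(0,0,c,d,e)+t\,(a',\dots,e')$; this vector has unit norm because $\pscal{(0,0,c,d,e),(a',\dots,e')}=\pscal{\Psi,\Phi}=0$ (the configurations are orthonormal and $\tD_1\perp\tP_1$), so $\Psi_t$ is admissible for \eqref{def_min_sp_pd}. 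Since $H$ commutes with $L^2,S^2$ we have $\pscal{\Psi,H\Phi}=0$, hence $\cE(\Psi_t)=(1-t^2)\cE(\Psi)+t^2\cE(\Phi)=\cE(\Psi)+t^2(\cE(\Phi)-\cE(\Psi))<\cE(\Psi)$ for $t\neq0$; as the parameters of $\Psi_t$ converge to those of $\Psi$ when $t\to0$, $\Psi$ is not a local minimum of \eqref{functional_3D}.

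The main obstacle is in assertion (1): one must check, through the explicit decomposition \eqref{formula_Psi}, that varying an orbital such as $R_2$ (or $R_0$, $R_4$) in the \emph{full} functional — where $R_2$ and $R_3$ are independent — produces a $\tD_1$-component of $\delta\Psi$ that is still realized by an orbital variation admissible in the constrained $E_{pd}(\tD_1)$ problem. This works precisely because $a=b=0$ kills every $sp$-contribution, and the specific values of $c,d,e$ make the $pd$-part in the $\tD_1$ sector collapse to a single $\tD_{pd}$ term whose coefficient does not involve the varied orbital. Assertion (2) is comparatively routine; its only delicate point is that the perturbed coefficient vector stays on the sphere, which is exactly the orthogonality $\pscal{\Psi,\Phi}=0$ between states of different symmetry.
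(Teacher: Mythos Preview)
Your argument is correct and follows essentially the same approach as the paper, which treats this theorem by direct analogy with the $\sP_1$ case. Your use of the projection $P_{\tD}$ in part~(1) is a slightly cleaner packaging of the paper's argument (the paper splits into coefficient variations, handled by symmetry, and orbital variations, handled via the Euler--Lagrange equations of the constrained problem together with ``happy'' occupation-number relations), and you correctly observe that for $\tD_1$ no analogue of Remark~\ref{rmk:occ1} is needed because there is a single $\tD_{pd}$ configuration. For part~(2) your construction $\Psi_t=\sqrt{1-t^2}\,\Psi+t\,\Phi$ with $\Phi\in\tP_1$ is exactly the paper's: in the paper's notation one takes $(a',b')=\alpha\,U_{sp}^*(0,1)^T$, $(c',d',e')=\beta\,U_{pd}^*(0,1,0)^T$, and the orthogonality of the coefficient vectors follows from that of the Condon--Shortley rows, which is equivalent to your observation that $\pscal{\Psi,\Phi}=0$ together with orthonormality of the five configurations.
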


\begin{remark}
Our assumption \eqref{condition_3D1} exactly means that $\lambda(\tD_1)=E_{pd}({\,^3\!D_1})$ is not the lowest eigenvalue of the Hamiltonian matrix as explained above in Remark \ref{rmk:eigenvalues}. 
It is not clear how to prove  \eqref{condition_3D1} rigorously. In Section \ref{sec:num} we will verify it  numerically on the approximated solutions provided by the ATSP HF and MCHF program by Froese-Fischer \cite{ATSP}.
\end{remark}

\begin{proof}
We do not give all the details of the proof which is very similar to that of the $\sP_1$ state. The fact that $\Psi$ is a stationary point is seen exactly as in Theorem \ref{thm:1P1}.

The proof that $\Psi$ is not a local minimum is also very similar to that of the $\sP_1$ state, with the difference that we do not have a general Hund's rule for $\tD_1$ states, hence we need to require condition \eqref{condition_3D1}, which  is expected to be true in many cases. This time we have (assuming again $\epsilon=\epsilon'=\varepsilon''=1$ for simplicity)
$$\left(\begin{matrix} a\\ b\end{matrix}\right)=\left(\begin{matrix} 0\\ 0\end{matrix}\right),\qquad\left(\begin{matrix} c\\ d\\ e\end{matrix}\right)=\,U_{pd}^*\left(\begin{matrix} 1\\ 0\\0\end{matrix}\right).$$
The result is then obtained by arguing as before with, this time,
$$\left(\begin{matrix} a'\\ b'\end{matrix}\right)=\alpha\,U_{sp}^*\left(\begin{matrix} 0\\ 1\end{matrix}\right),\qquad\left(\begin{matrix} c'\\ d'\\ e'\end{matrix}\right)=\beta\,U_{pd}^*\left(\begin{matrix} 0\\ 1\\0\end{matrix}\right),$$
where $\alpha$ and $\beta$ are chosen to minimize the left side of \eqref{condition_3D1}.
\end{proof}

\subsection{Study of the $\tP_1$ state}
In this section, we give a simple condition implying that `the' $\tP_1$ state minimizing $E_{sp+pd}({\,^3\!P_1})$ is \emph{not} a stationary point of the total energy functional.
\begin{theorem}[The $\tP_1$ state]\label{thm:3P1} Let $\Psi\in\tP_1$ be a function of the form \eqref{IND-triplet} minimizing $E_{sp+pd}({\,^3\!P_1})$ defined in \eqref{def_min_sppd_3P}, and denote by
$(a,...,e,R_0,...,R_5)$ the associated mixing coefficients and orbitals satisfying the relations of Theorem \ref{thm:symmetric_fn}.

If there exists $\delta R\in (R_3)^\perp$ such that
\begin{equation}
c\pscal{H\Psi,\tP_{pd}(R_0,\delta R,R_4)}\neq0\quad \text{or}\quad a\pscal{H\Psi,\tP_{sp}(R_0,R_1,\delta R)}\neq0,
\label{condition_3P1}
\end{equation}
then $\Psi$ does \emph{not} provide a {stationary point} of the total energy functional
$$(a,...,e,R_0,...,R_5)\mapsto \pscal{\Psi,H\Psi}.$$
\end{theorem}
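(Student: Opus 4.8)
The plan is to exhibit a single admissible first-order variation of $\Psi$ inside the class \eqref{form_wavefn_sp_pd} along which the total energy has a non-vanishing derivative; condition \eqref{condition_3P1} is precisely what makes such a variation available. In \eqref{form_wavefn_sp_pd} the radial functions $R_2$ and $R_3$ are \emph{independent} parameters, subject only to normalization (the sole orthogonality constraint being $R_0\perp R_1$); hence a stationary point of the $J=1$ functional must in particular be stationary with respect to the two separate variations $R_2\mapsto R_2+t\,\delta R$ and $R_3\mapsto R_3+t\,\delta R$, i.e. it must satisfy $\frac{\partial\cE(\Psi)}{\partial R_2}(\delta R)=0$ for all $\delta R\perp R_2$ and $\frac{\partial\cE(\Psi)}{\partial R_3}(\delta R)=0$ for all $\delta R\perp R_3$. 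For a $\tP_1$ state $R_2=\varepsilon R_3$, so $(R_2)^\perp=(R_3)^\perp$, and it is therefore enough to find one $\delta R$ in this common space with $\frac{\partial\cE(\Psi)}{\partial R_2}(\delta R)\neq0$ or $\frac{\partial\cE(\Psi)}{\partial R_3}(\delta R)\neq0$.

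First I would compute these two directional derivatives at $\Psi$, taking $\varepsilon=\varepsilon'=1$ for simplicity (the remaining signs being handled identically). Starting from the decomposition \eqref{formula_Psi} of a generic element of \eqref{form_wavefn_sp_pd} into its $\tP_{sp}$, $\tP_{pd}$, $\sP_{sp}$, $\sP_{pd}$ and $\tD_{pd}$ components, I differentiate in $R_2$ (resp.\ $R_3$) and pair with $H\Psi$. Since $\Psi\in\tP_1$ and $[H,L^2]=[H,S^2]=0$, every $\sP$ and every $\tD$ contribution drops out, because those vectors lie in symmetry sectors orthogonal to $\Psi$. Writing $A:=\pscal{H\Psi,\tP_{sp}(R_0,R_1,\delta R)}$ and $B:=\pscal{H\Psi,\tP_{pd}(R_0,\delta R,R_4)}$, one is left with $\frac{\partial\cE(\Psi)}{\partial R_2}(\delta R)=M_{11}A+M_{12}B$ and $\frac{\partial\cE(\Psi)}{\partial R_3}(\delta R)=M_{21}A+M_{22}B$, whose coefficients are read off from \eqref{formula_Psi} and then simplified using the relations $a=\varepsilon\sqrt2\,b$, $3d=4\varepsilon'e$, $4c=-\varepsilon\sqrt5\,d$ of Theorem~\ref{thm:symmetric_fn}(1). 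The crucial computational fact is that after this substitution
$$\det\begin{pmatrix}M_{11}&M_{12}\\ M_{21}&M_{22}\end{pmatrix}=-6\,ac,\qquad M_{11}=\frac{2\sqrt2\,a}{\sqrt3},\qquad M_{12}=-\frac{2c}{\sqrt6}.$$
This is exactly where the triplet parts company with the singlet of Remark~\ref{rmk:occ1}: there the analogous matrix is \emph{singular}, because the ratio of the occupation numbers of $R_2$ and $R_3$ in the $\sP_{sp}$ block coincides with the one in the $\sP_{pd}$ block; for the triplet these two ratios are different, so the determinant does not vanish.

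To conclude, suppose $ac\neq0$. Then the matrix is invertible, so whenever $(A,B)\neq(0,0)$ — which is what \eqref{condition_3P1} guarantees — at least one of $\frac{\partial\cE(\Psi)}{\partial R_2}(\delta R)$, $\frac{\partial\cE(\Psi)}{\partial R_3}(\delta R)$ is non-zero, and $\Psi$ is not a stationary point. The degenerate cases are immediate: $a=c=0$ cannot occur, since by \eqref{IND-triplet} it would force $\Psi=0$; and if exactly one of $a,c$ vanishes, then $\Psi$ is a pure $pd$ (resp.\ $sp$) triplet, \eqref{condition_3P1} reduces to $c\,B\neq0$ (resp.\ $a\,A\neq0$), and the explicit coefficients $M_{12}=-2c/\sqrt6\neq0$ (resp.\ $M_{11}=2\sqrt2\,a/\sqrt3\neq0$) show directly that $\frac{\partial\cE(\Psi)}{\partial R_2}(\delta R)\neq0$. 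In every case $\Psi$ fails to be a stationary point of the total energy functional, which is the assertion.

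The only real work lies in the middle step: keeping track of which of the many variations produced by \eqref{formula_Psi} survive the pairing with $H\Psi$, and inserting the Condon--Shortley coefficients so as to land exactly on $\det M=-6ac$. There is no analytic subtlety — the obstacle, such as it is, is purely one of careful bookkeeping, with the usual risk of an arithmetic slip in the $2\times2$ matrix.
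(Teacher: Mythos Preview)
Your proof is correct and follows essentially the same route as the paper: compute $\partial\cE(\Psi)/\partial R_2$ and $\partial\cE(\Psi)/\partial R_3$ at the triplet state using \eqref{formula_Psi}, observe that the $\sP$ and $\tD$ pieces drop out because $[H,L^2]=[H,S^2]=0$, and reduce to a $2\times2$ linear system in $(A,B)$ whose nondegeneracy forces $aA=cB=0$ for stationarity. The only cosmetic difference is that the paper factors $a$ and $c$ out of the matrix, obtaining the constant invertible matrix
\[
\begin{pmatrix}\sqrt{2}/\sqrt{3}&-\sqrt{5}/\sqrt{30}\\ 1/\sqrt{6}&-5\sqrt{5}/\sqrt{30}\end{pmatrix},
\]
which spares it your separate treatment of the degenerate cases $a=0$ or $c=0$; your version with $\det M=-6ac$ is equivalent and your handling of the degenerate cases is fine.
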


\begin{remark}
Condition \eqref{condition_3P1} is very intuitive. It indeed \emph{implies} that
\begin{equation}
E_{sp+p'd}(\tP_1)<E_{sp+pd}(\tP_1)\,,
\label{ineqq}
\end{equation}
where
$$E_{sp+p'd}(\tP_1)=\inf_{\substack{\alpha^2+\beta^2=1,\\(R_0,...,R_4)\\\text{ satisfying \eqref{constraint_sp_1} and \eqref{constraint_sp_2}} }}\cE\left(\alpha\tP_{sp}(R_0,R_1,R_2)+\beta\tP_{pd}(R_0,R_3,R_4)\right).$$ 
The relation between \eqref{condition_3P1} and \eqref{ineqq}
 was already noticed in \cite{FroeseFischer-00}.

Note that when $a,c\neq0$, one can prove that $ \pscal{H\Psi,\tP_{pd}(R_0,\delta R,R_4)}\neq0$ or that $\pscal{H\Psi,\tP_{sp}(R_0,R_1,\delta R)}\neq0$ for some $\delta R$ and these two conditions are indeed equivalent.
In Section \ref{sec:num}, the condition $\pscal{H\Psi,\tP_{sp}(R_0,R_1,\delta R)}\neq0$ is verified numerically.
\end{remark}

We now give the
\begin{proof}
As before, it can easily be seen that our state $\Psi$ is indeed a stationary state with respect to variations of the mixing coefficients only. The non-criticality will come from the variations of the orbitals, as suggested by \eqref{condition_3P1}. As before we assume for simplicity that $\epsilon=\epsilon'=1$, hence $R_2=R_3$ and $R_4=R_5$ for the ground state $\Psi$.

By extremality of the triplet function $\Psi$ among $\tP_1$ states, we have similarly as before
\begin{equation}\label{deriv01-triplet}
\frac{\partial{\mathcal E}(\Psi)}{\partial R_i}\in\text{span}(R_0,R_1) \;\mbox{ for }\; i=0,1\,,
\end{equation}
\begin{equation}\label{deriv23-triplet}
\frac{\partial{\mathcal E}(\Psi)}{\partial R_2}+\frac{\partial{\mathcal E}(\Psi)}{\partial R_3}{{|_{R_3=R_2}}}\in\text{span}(R_2)\
\end{equation}
and
\begin{equation}\label{deriv45-triplet}
\frac{\partial{\mathcal E}(\Psi)}{\partial R_4}+\frac{\partial{\mathcal E}(\Psi)}{\partial R_5}{{|_{R_5=R_4}}}\in\text{span}(R_4).
\end{equation}
This time we find using the relations of Theorem \ref{thm:symmetric_fn} that
$$\frac{\partial{\mathcal E}(\Psi)}{\partial R_4}=\frac{\partial{\mathcal E}(\Psi)}{\partial R_5}{{|_{R_5=R_4}}},$$
hence only variations with respect to $R_2$ and $R_3$ remain to be considered.
The main point is that there is \emph{a priori} no relation between $\frac{\partial{\mathcal E}(\Psi)}{\partial R_2}$ and $\frac{\partial{\mathcal E}(\Psi)}{\partial R_3}$.
More precisely, let us consider a variation $\delta R\in (R_2=R_3)^\perp$. We have
$$\frac{\partial{\mathcal E}(\Psi)}{\partial R_2}(\delta R)=2\pscal{H\Psi\, ,\, \left(\frac{a\sqrt{2}}{\sqrt{3}}\tP_{sp}(R_1,\delta R)+\frac{-c\sqrt{5}}{\sqrt{30}}\tP_{pd}(\delta R,R_4)\right)}$$
and
\begin{align*}
\frac{\partial{\mathcal E}(\Psi)}{\partial R_3}(\delta R)&=2\pscal{H\Psi\, ,\, \left(\frac{b}{\sqrt{3}}\tP_{sp}(R_1,\delta R)+\frac{4d+3e}{\sqrt{30}}\tP_{pd}(\delta R,R_4)\right)}\\
&=2\pscal{H\Psi\, ,\, \left(\frac{a}{\sqrt{6}}\tP_{sp}(R_1,\delta R)+\frac{-5c\sqrt{5}}{\sqrt{30}}\tP_{pd}(\delta R,R_4)\right)},
\end{align*}
where we have used the relations of Theorem \ref{thm:symmetric_fn} in the last line. The main difference with the $\sP_1$ state is now that the matrix
\begin{equation}
\left(\begin{matrix}\frac{\sqrt{2}}{\sqrt{3}}&-\frac{\sqrt{5}}{\sqrt{30}}\\   \frac{1}{\sqrt{6}} & -\frac{5\sqrt{5}}{\sqrt{30}}     \end{matrix}\right)
\label{matrix_3P1}
\end{equation}
is invertible. Hence we get that $\Psi$ is stationary with respect to variations of $R_2$ and $R_3$ independently if and only if
$$\forall \delta R\in (R_2=R_3)^\perp,\qquad a\pscal{H\Psi\, ,\, \tP_{sp}(R_1,\delta R)}=c\pscal{H\Psi\, ,\, \tP_{pd}(\delta R,R_4)}=0.$$
This clearly leads to a contradiction when \eqref{condition_3P1} holds true.
\end{proof}

\begin{remark}\label{rmk:occ2}
The fact that the matrix \eqref{matrix_3P1} is invertible can be interpreted in saying that the ratio between the occupation number of the $p_{1/2}$ function $R_2$ and the one of the $p_{3/2}$ function $R_3$ in the $sp$ configuration is not the same as the one of the $pd$ configuration \cite{KimParMarIndDes-98}.
\end{remark}

\subsection{Conclusion}
In the previous sections we have studied the states of the nonrelativistic symmetries $\tP_1$, $\sP_1$ and $\tD_1$. As a consequence of our results we obtain the
\begin{corollary}[Occurrence of symmetry breaking for $sp+pd$]
Assume that \eqref{condition_3D1} holds for a $\tD_1$ ground state and that condition \eqref{condition_3P1} holds for a $\tP_1$ ground state. Then we have
\begin{equation}
E_{sp+pd}(J=1)< \min\bigg\{E_{sp+pd}({\,^3\!P_1})\,,\,E_{sp+pd}({\,^1\!P_1})\,,\,E_{sp+pd}({\,^3\!D_1})\bigg\}. 
\label{ineg_strict}
\end{equation}
Additionally a ground state $\Psi$ for $E_{sp+pd}(J=1)$ is never an eigenfunction of $L^2$ and neither of $S^2$.
\end{corollary}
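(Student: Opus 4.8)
The plan is to combine the three theorems we have already proved with the boxed inequality \eqref{ineg_var_pb}. The corollary asserts two things: the strict inequality \eqref{ineg_strict}, and the fact that no ground state for $E_{sp+pd}(J=1)$ lies in $\sP_1$, $\tP_1$ or $\tD_1$. I would prove the strict inequality first, then deduce the symmetry-breaking statement from it (together with Theorem \ref{thm:symmetric_fn}).

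For the strict inequality, I argue by contradiction: suppose equality holds in \eqref{ineg_var_pb}, so that
\[
E_{sp+pd}(J=1)=\min\big\{E_{sp+pd}(\tP_1),E_{sp+pd}(\sP_1),E_{pd}(\tD_1)\big\}.
\]
First one must know that $E_{sp+pd}(J=1)$ is actually attained; this follows from the standard concentration-compactness/direct-method arguments already invoked for the $sp$ case (Corollary \ref{cor:Hund_sp}) and in \cite{LieSim-77,Lions-87,Friesecke-03,Lewin-04a}, applied to the finite-configuration variational set \eqref{form_wavefn_sp_pd}. Let $\Psi$ be such a minimizer. Then $\cE(\Psi)$ equals the minimum of the three symmetric energies, so $\Psi$ would in particular be a global minimum of the $J=1$ functional, hence \emph{a fortiori} a local minimum and a stationary point of that functional. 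Now look at which of the three symmetry values realizes the minimum. By Theorem \ref{thm:3P1} (using hypothesis \eqref{condition_3P1}), a $\tP_1$ ground state is not a stationary point of the $J=1$ functional, so the minimum cannot be $E_{sp+pd}(\tP_1)$, i.e. $E_{sp+pd}(\tP_1)>E_{sp+pd}(J=1)$ is not forced — rather, a $\tP_1$ minimizer, not being stationary, has strictly above it some nearby $J=1$ trial state, giving $E_{sp+pd}(J=1)<E_{sp+pd}(\tP_1)$ directly. Similarly, by Theorem \ref{thm:1P1}(2), a $\sP_1$ ground state is never a local minimum of the $J=1$ functional, so $E_{sp+pd}(J=1)<E_{sp+pd}(\sP_1)$; and by Theorem \ref{thm:3D1}(2), under \eqref{condition_3D1}, a $\tD_1$ ground state is not a local minimum either, so $E_{sp+pd}(J=1)<E_{pd}(\tD_1)$. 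Taking the minimum of the three strict inequalities yields \eqref{ineg_strict}.

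For the symmetry-breaking statement, let $\Psi$ be any ground state for $E_{sp+pd}(J=1)$. If $\Psi$ were an eigenfunction of $L^2$ and $S^2$, then by Theorem \ref{thm:symmetric_fn} it would be of the form \eqref{IND-triplet}, \eqref{IND-singlet} or \eqref{IND-D}, hence a trial state for $E_{sp+pd}(\tP_1)$, $E_{sp+pd}(\sP_1)$ or $E_{pd}(\tD_1)$ respectively, giving $\cE(\Psi)\geq \min\{E_{sp+pd}(\tP_1),E_{sp+pd}(\sP_1),E_{pd}(\tD_1)\}$, which contradicts \eqref{ineg_strict} since $\cE(\Psi)=E_{sp+pd}(J=1)$. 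One should also check the slightly stronger claim that $\Psi$ is not even an eigenfunction of $S^2$ alone (nor of $L^2$ alone); here one uses that within the variational set \eqref{form_wavefn_sp_pd} the only eigenfunctions of $S^2$ are already eigenfunctions of $L^2$ too (and conversely), which can be read off from the decomposition \eqref{formula_Psi} — an eigenfunction of $S^2$ kills all components of the wrong total spin, and among $s,p,d$ configurations with $J=1$ the singlet subspace consists only of $\sP$ components and the triplet subspace of $\tP$ and $\tD$ components, so an $S^2$-eigenfunction of triplet type with no $\tD$ part is automatically in $\tP_1$, while one with a $\tD$ part but zero $\sP,\tP$ parts is in $\tD_1$; the general triplet case needs the relations of Theorem \ref{thm:symmetric_fn} to rule out mixed $\tP/\tD$ eigenfunctions of $L^2$.

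The main obstacle is the attainment of the infimum $E_{sp+pd}(J=1)$: without a minimizer $\Psi$, the phrases ``local minimum'' and ``stationary point'' in Theorems \ref{thm:1P1}, \ref{thm:3D1}, \ref{thm:3P1} cannot be leveraged to contradict equality in \eqref{ineg_var_pb}. Strictly speaking one does not need existence of a $J=1$ minimizer for the strict inequality — it suffices that each \emph{symmetric} minimizer (which does exist, by the cited compactness results applied to the smaller symmetric variational sets) fails to be a local minimum of the larger $J=1$ functional, which is exactly what the three theorems give, and this immediately produces a $J=1$ trial state of strictly lower energy. So I would phrase the proof to avoid invoking a $J=1$ minimizer for \eqref{ineg_strict}, and only invoke a $J=1$ ground state (assumed to exist, as in the statement) for the second assertion. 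The remaining minor technical point is the ``neither of $S^2$'' refinement, which requires the elementary representation-theoretic bookkeeping sketched above rather than anything deep.
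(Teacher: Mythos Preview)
Your argument for the strict inequality \eqref{ineg_strict} and for the impossibility of $\Psi$ being a \emph{common} eigenfunction of $L^2$ and $S^2$ is correct and matches the paper's reasoning. The gap is in the last step, the ``neither $S^2$ nor $L^2$ alone'' refinement.

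Your key claim there --- that within the variational set \eqref{form_wavefn_sp_pd} every eigenfunction of $S^2$ is automatically an eigenfunction of $L^2$ and conversely --- is \emph{false}. A triplet $S^2$-eigenfunction can perfectly well mix $\tP$ and $\tD$ components (so has no definite $L$): take $R_2=R_3$, $R_4=R_5$, $a=\sqrt{2}\,b$, and $c,d,e$ with $c\sqrt{5}=d-3e$ but $3d\neq 4e$ and $d\neq -2e$; this kills all singlet pieces in \eqref{formula_Psi} while keeping both $\tP_{pd}$ and $\tD_{pd}$ nonzero. Symmetrically, an $L=1$ eigenfunction can mix $\sP$ and $\tP$ and hence fail to be an $S^2$-eigenfunction. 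So the ``representation-theoretic bookkeeping'' you allude to does not close the argument, and Theorem~\ref{thm:symmetric_fn} is of no help for these genuinely mixed states, since it only characterizes states lying in a single symmetry sector.

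The paper proceeds differently. For the $L^2$ case, one first discards $L=2$ (pure $\tD_1$) by \eqref{ineg_strict}, and then for $L=1$ reads off from \eqref{formula_Psi} the vanishing of the $\tD$ components; this forces either $e=0$ or $R_4=\pm R_5$, and in each case one obtains $R_2=\pm R_3$ together with explicit relations among $c,d,e$. The crucial consequence of these orbital identifications is that the $\tP$ part and the $\sP$ part of $\Psi$ become \emph{admissible trial states} for $E_{sp+pd}(\tP_1)$ and $E_{sp+pd}(\sP_1)$ respectively (same $p$-orbital in $sp$ and $pd$). One then gets
\[
\cE(\Psi)\geq \lambda\, E_{sp+pd}(\tP_1)+(1-\lambda)\,E_{sp+pd}(\sP_1)\ \geq\ E_{sp+pd}(\tP_1),
\]
using Hund's rule for the last inequality, which contradicts \eqref{ineg_strict}. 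The $S^2$ case is handled analogously (kill the singlet pieces in \eqref{formula_Psi}, deduce $R_2=\pm R_3$ and $R_4=\pm R_5$, decompose into admissible $\tP_1$ and $\tD_1$ trial states). This is an honest computation using the structure of \eqref{formula_Psi}, not a formality.
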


\begin{proof}
The strict inequality \eqref{ineg_strict} is an obvious consequence of the previous results. It implies that a ground state for the minimization problem $E_{sp+pd}(J=1)$ cannot be a common eigenfunction of $L^2$ and $S^2$.
What remains to be proven is that it cannot even be an eigenfunction of $L^2$ or of  $S^2$ separately. This means that any ground state must have a nonvanishing projection in each of the symmetries $\sP_1$, $\tP_1$ and $\tD_1$.

Assume for instance that $\Psi$ is an eigenfunction of $L^2$. It cannot be a $\tD_1$ state by \eqref{ineg_strict}, hence one must have $\Psi\in\sP_1+\tP_1$. Using \eqref{formula_Psi}, one obtains the conditions
$$\left\{\begin{array}{rl}
c\sqrt{15}R_2+\left(2d\sqrt{3}-e\sqrt{3}(R_4,R_5)\right)R_3&=0,\\
e\left(R_5-(R_4,R_5)R_4\right)&=0.
\end{array}\right.$$
If $e=0$ we get $R_2=\epsilon R_3$ and $c\sqrt{5}=-2\epsilon d$ for some $\epsilon\in\{\pm1\}$. Hence
\begin{multline*}
\Psi=\frac{a\sqrt{2}+\epsilon b}{\sqrt{3}}\;{}^3\!P_{sp}(R_1, R_2)
+\frac{-c\sqrt{2}}{\sqrt{3}}
\,{}^3\!P_{pd}(R_2, R_4) \\
+\frac{-a+b\epsilon\sqrt{2}}{\sqrt{3}}\;{}^1\!P_{sp}(R_1,R_2)+
\frac{c\sqrt{3}}{2}\,{}^1\!P_{pd}(R_2, R_4).
\end{multline*}
By Hund's rule we get
\begin{multline*}
\cE(\Psi)\geq \norm{\frac{a\sqrt{2}+\epsilon b}{\sqrt{3}}\;{}^3\!P_{sp}(R_1, R_2)
+\frac{-c\sqrt{2}}{\sqrt{3}}
\,{}^3\!P_{pd}(R_2, R_4)}^2E_{sp+pd}(\tP_1)\\
+\norm{\frac{-a+b\epsilon\sqrt{2}}{\sqrt{3}}\;{}^1\!P_{sp}(R_1,R_2)+
\frac{c\sqrt{3}}{2}\,{}^1\!P_{pd}(R_2, R_4)}^2E_{sp+pd}(\sP_1)\geq E_{sp+pd}(\tP_1)
\end{multline*}
which contradicts \eqref{ineg_strict}. The argument is the same if $R_4=\pm R_5$. Hence we have shown that $\Psi$ cannot be an eigenfunction of $L^2$.
The proof that $\Psi$ cannot be an eigenfunction of $S^2$ is very similar.
\end{proof}

\section{Numerical verification of \eqref{condition_3D1} and \eqref{condition_3P1}}\label{sec:num}

In this section we verify numerically the two assumptions made in the previous section. We use the package ATSP of Froese-Fischer \cite{ATSP} to get approximations of the nonrelativistic states in the different symmetry spaces. The program uses 220 discretization points on a logarithmic grid and  a finite difference method. We have then verified conditions \eqref{condition_3D1} and \eqref{condition_3P1}  with the help of {\sl Mathematica}.

\subsection{Verification of \eqref{condition_3P1}}

First we start with the MCHF calculation for the configuration ${\,^3\!P_{sp+pd}}\,$ for the Beryllium atom. This yields constants $a, c$ and functions $R_0, R_1, R_2, R_4$, which are numerical approximations of the real ones. We can also run the program for the simpler Hartree-Fock cases ${\,^3\!P_{sp}}$, ${\,^3\!P_{pd}}$ and we get the following total energies (in Hartree).
\begin{align*}E_{sp}\big({}^3\!P_1\big)&\simeq -14.5115\,,\\
E_{sp+pd}\big({}^3\!P_1\big)&\simeq -14.5166\,.
\end{align*}
Also we obtain for the ${}^3\!P_{sp+pd}$ configuration the following numerical values for $a$ and $c$:
$$a\simeq 0.9951963\,,\qquad c \simeq -0.0978997\,.$$
The plots of the radial parts of the orbitals $R_0$, $R_1$, $R_2=R_3$ and $R_4=R_5$ (corresponding respectively to the shells $1s$, $2s$, $2p$ and $3d$) are displayed in Figure \ref{fig:1s_2s_2p_3d} below.

\begin{figure}[h]
\begin{center}
\begin{tabular}{cc}
\includegraphics[width=6.5cm]{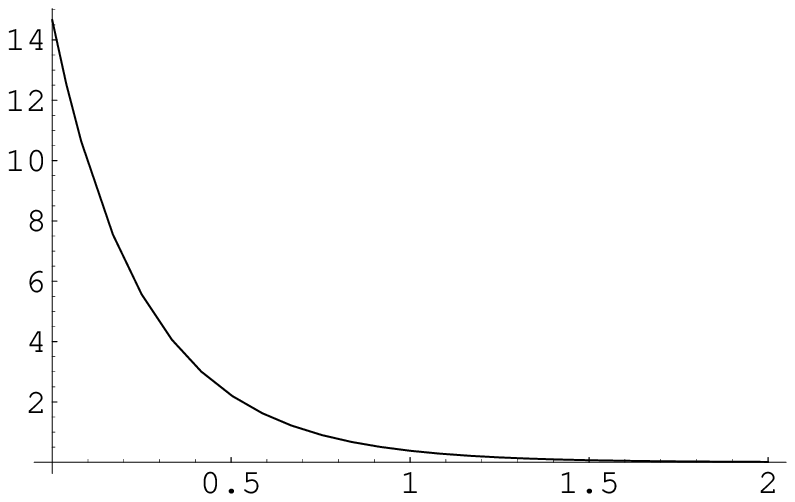} & \includegraphics[width=6.5cm]{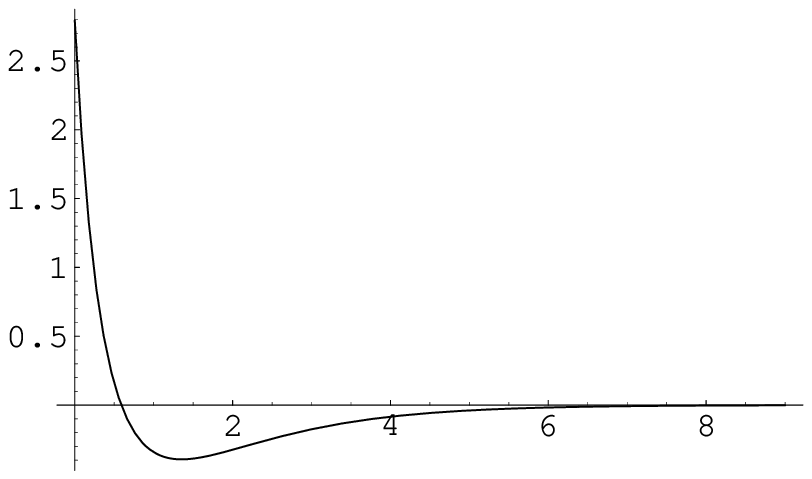}\\
\includegraphics[width=6.5cm]{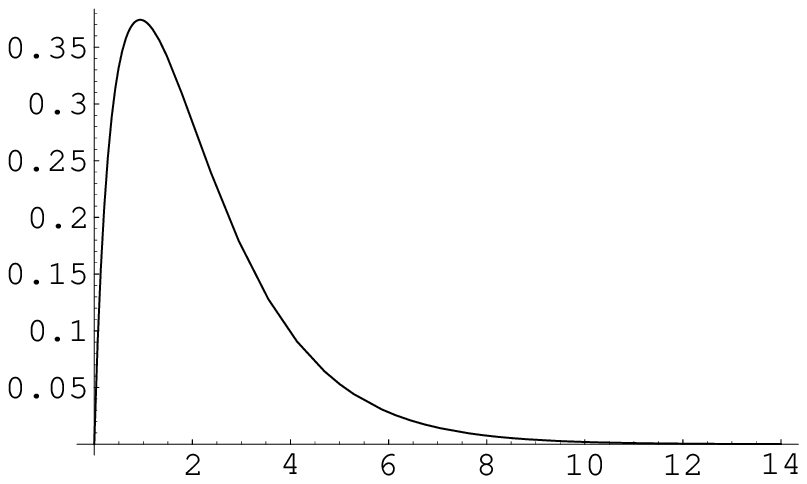} & \includegraphics[width=6.5cm]{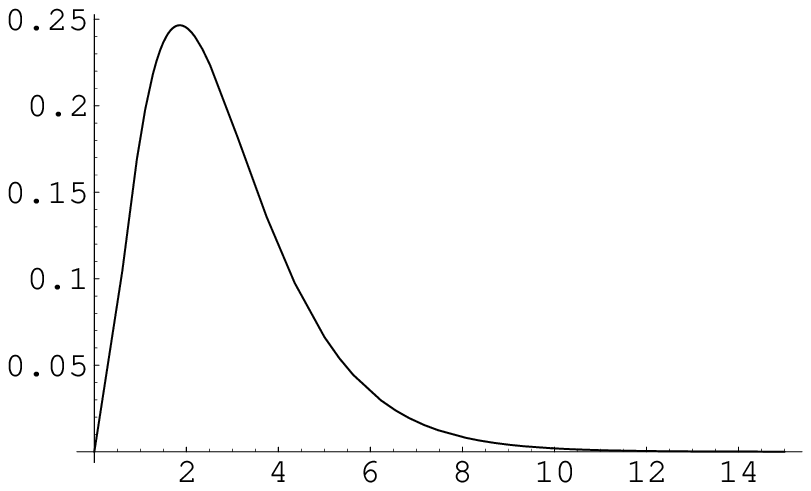}
\end{tabular}
\caption{\label{fig:1s_2s_2p_3d}\it Plots of $R_0$ (top left), $R_1$ (top right), $R_2$ (bottom left) and $R_4$ (bottom right) for the optimal $\,{}^3\!P_1(sp+pd)\,$ configuration of the Beryllium atom, obtained with the package ATSP.}
\end{center}
\end{figure}

Now, we are going to verify \eqref{condition_3P1} by showing the existence of a function $\delta R$, orthogonal to $R_2$, such that 
\begin{equation}
\pscal{H\Psi,\tP_{sp}(R_0,R_1,\delta R)}\neq0\,.
\label{cond_to _verify} 
\end{equation}
As explained in detail in Appendix B, we can calculate the exact expression of \eqref{cond_to _verify} in terms of the radial functions only. We obtain
\begin{align*}
 \pscal{\,H\Psi,\,\tP_{sp}(R_0,R_1,\delta R)\,} &={a}\int_0^\infty \left(\frac{s^4}{2}\left(\frac{R_2(s)}{s}\right)'\left(\frac{\delta R(s)}{s}\right)'-4\,s\,R_2(s)\,\delta R(s)\right)\,ds\\
&\quad+a \int_0^\infty\int_0^\infty\left(2|R_0(s)|^2+|R_1(s)|^2 \right)\,  R_2(t)\, \delta R(t)\,\frac{s^2\,t^2}{\max{\{s,t\}}}\,ds\,dt\\
&\quad-a\,\int_0^\infty\int_0^\infty  R_0(s)\, \delta R(s)\,   R_2(t)\, R_0(t)\;\frac{s^2\,t^2\,\min{\{s,t\}}}{3\,\max^2{\{s,t\}}}\,ds\,dt \\
&\quad-a\int_0^\infty\int_0^\infty  R_1(s)\, \delta R(s)\,   R_2(t)\, R_1(t)\;\frac{s^2\,t^2\,\min{\{s,t\}}}{3\,\max^2{\{s,t\}}}\,ds\,dt \\
& \quad -c\,\sqrt{2}\, \int_0^\infty\int_0^\infty R_1(s)\,   R_2(s)\, \delta R(t)\,   R_4(t)\;\frac{s^2\,t^2\,\min{\{s,t\}}}{3\,\max^2{\{s,t\}}}\,ds\,dt \,.
\end{align*}

Finding a function  $\delta R$, orthogonal to $R_2$, for which the above formula is away from $0$ is less convincing than arguing as follows. The above formula is linear with respect to $\delta R$ hence it can be written
$$\pscal{\,H\Psi,\,\tP_{sp}(R_0,R_1,\delta R)\,}=\int_0^\ii F(r) \delta R(r)\,dr$$
with
\begin{align*}
&F(r):= a\left(\!-\frac{r^2}2\,R_2''(r)-r\,R_2'(r) + R_2(r) - 4r\,R_2(r)\right)
+a \int_0^\infty\!\!\left(2|R_0(s)|^2+|R_1(s)|^2 \right)  \frac{R_2(r)\,s^2\,r^2}{\max{\{s,r\}}}ds\\
& -a\int_0^\infty  R_0(r)\,    R_2(s)\, R_0(s)\;\frac{r^2\,s^2\,\min{\{r,s\}}}{3\,\max^2{\{r,s\}}}\,ds 
-a\int_0^\infty  R_1(r)\,    R_2(s)\, R_1(s)\;\frac{r^2\,s^2\,\min{\{r,s\}}}{3\,\max^2{\{r,s\}}}\,ds \\
&\qquad\qquad\quad  -{c\,\sqrt{2}}\, \int_0^\infty R_1(s)\,   R_2(s)\,    R_4(r)\;\frac{s^2\,r^2\,\min{\{s,r\}}}{3\,\max^2{\{r,t\}}}\,ds \,.
\end{align*}
Then we note that the vanishing of $\int_0^\ii F(r) \delta R(r)\,dr$ for all $\delta R$ orthonormal to $R_2$, is indeed equivalent to the existence of a Lagrange multiplier $\lambda$ such that $F=\lambda R_2$. Hence it suffices to show  that $F(r)/R_2(r)$ is not constant in order to verify \eqref{condition_3P1}.

We compute numerically $F/R_2$ and get the following graph: \\
\begin{figure}[h]\begin{center}\includegraphics[width=6cm]{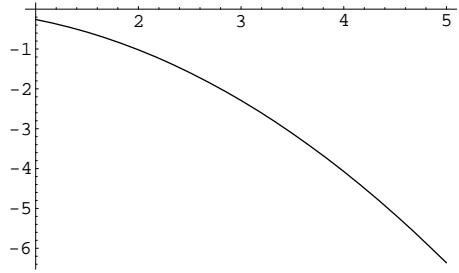}\hspace*{1.5cm}\caption{\it Plot of $F(r)/R_2(r)$ for $r$ between $1$ and $5$.\,}\end{center}\end{figure}\\
This clearly shows that the function $F/R_2$ is not constant. Hence \eqref{condition_3P1} holds true.

\subsection{Verification of \eqref{condition_3D1}}

Let us now pass to the numerical verification of \eqref{condition_3D1}. In order to do so, we again run the Froese-Fischer ATSP program \cite{ATSP} for the HF calculation of the ${\,^3\!D_{ pd}}$ configuration for the Beryllium atom. This provides us with radial functions $R_0$,   $R_2$ and $R_4$ for the orbitals $1s, 2p, 3d$. For the $2s$ orbital, we construct a function $R_1$ by taking the one obtained in the calculation for the ${\,^3\!P_{ sp+pd}}$ configuration and projecting it on the orthogonal to the space generated by $\,  R_0\,$. We get the following numbers for the total energies (in Hartree):
\begin{align*}
E_{pd}({}^3\!D_{1})&\simeq  -14.1889\,,\\
\cE\left({}^3\!P_{sp}(  R_0,   R_1,   R_2)\right)&\simeq -14.4998\,,
\end{align*}
thus yielding a numerical verification of condition \eqref{condition_3D1}.

\appendix
\section*{Appendix A. Nonrelativistic configurations in the sector $J=1$ and $J_z=1$}
\setcounter{section}{1}
\addcontentsline{toc}{section}{Appendix A. Nonrelativistic configurations in the sector $J=1$ and $J_z=1$}
\label{app:nonrelativistic_cfg}

For the convenience of the reader, we quickly explain how to construct the nonrelativistic configurations used in the text.
In the whole appendix we use the notation
$$s^{\tau}(R)(x,\sigma)=R(|x|)\delta_{\tau}(\sigma),\qquad p^{\tau}_m(R)(x,\sigma)=R(|x|)Y_1^m\left(\frac{x}{|x|}\right)\delta_{\tau}(\sigma),$$ 
$$d^{\tau}_m(R)(x,\sigma)=R(|x|)Y_2^m\left(\frac{x}{|x|}\right)\delta_{\tau}(\sigma)$$ 
for $\tau\in\{\uparrow,\downarrow\}$ and $m=-\ell,...,\ell$. Here $Y^m_\ell$ are the usual eigenfunctions of $L^2$ and $L_z$ (orbital angular momentum), normalized in the Hilbert space $L^2(S^2)$, such that
$$L^2Y^m_\ell=\ell(\ell+1)Y^m_\ell,\qquad L_zY^m_\ell=mY^m_\ell,$$
and which can be expressed in terms of Legendre polynomials \cite{CondonShortley-63}.

\subsection*{$sp$ configurations}
\begin{table}
\centering
\includegraphics[width=13cm]{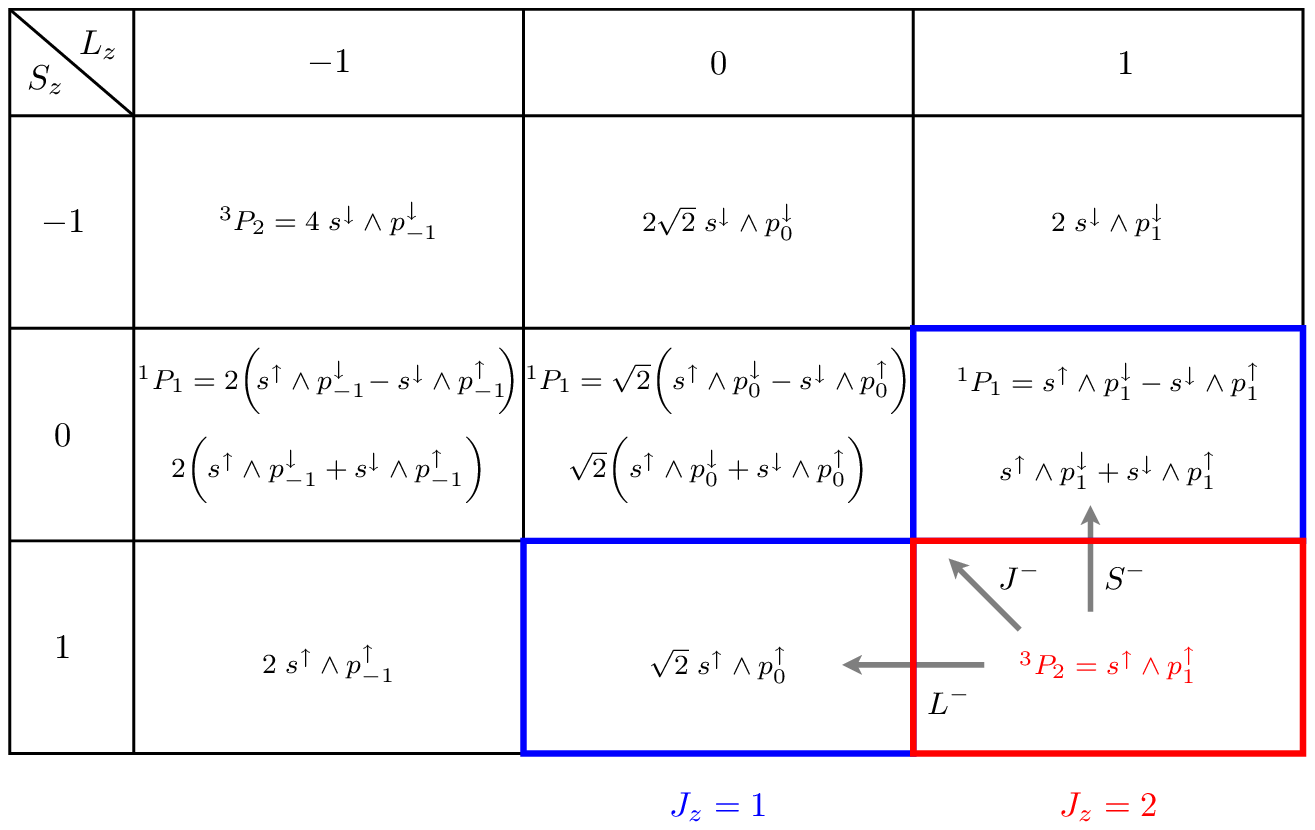}
\caption{Derivation of all the common eigenvectors of $L^2$, $S^2$, $L_z$ and $S_z$, starting from the $\tP_{2,sp}$ state having  all the highest possible quantum numbers, at the bottom right of the table. For the sake of clarity, we have not indicated the closed shell $s_0^\uparrow(R_0)\wedge s_0^\downarrow(R_0)\wedge\cdots$ which appears in front of all the configurations. We have also refrained from mentioning the radial functions $R_1$ and $R_2$.}
\label{tab:sp}
\end{table}

We start with $sp$ configurations. Table \ref{tab:sp} summarizes the different possible states which are eigenfunctions of both $L^2$ and $S^2$, classified with respect to their $L_z$ and $S_z$. The usual method is to start from the state having the highest possible quantum numbers ($S=1$, $S_z=1$, $L=1$, $L_z=1$, $J=2$ and $J_z=2$) which is located at the lower right corner:
\begin{equation}\label{3P2sp}
\tP_{2,sp}(R_0,R_1,R_2):=s^\uparrow(R_0)\wedge s^\downarrow(R_0)\wedge\bigg( s^\uparrow(R_1)\wedge p_1^\uparrow(R_2)\bigg).
\end{equation}

Next we apply the lowering operators $L^-=L_x-iL_y$ and $S^-=S_x-iS_y$ to get two states having the same $S=1$ and $L=1$ but without a precise $J$:
\begin{equation}
S^-\big(\tP_{2,sp}(R_0,R_1,R_2)\big)=s^\uparrow(R_0)\wedge s^\downarrow(R_0)\wedge\bigg( s^\uparrow(R_1)\wedge p_1^\downarrow(R_2)+s^\downarrow(R_1)\wedge p_1^\uparrow(R_2)\bigg),
\label{S_ref_sp}
\end{equation}
\begin{equation}
L^-\big(\tP_{2,sp}(R_0,R_1,R_2)\big):=\sqrt{2}\;s^\uparrow(R_0)\wedge s^\downarrow(R_0)\wedge\bigg( s^\uparrow(R_1)\wedge p_0^\uparrow(R_2)\bigg).
\label{L_ref_sp}
\end{equation}
A usual antisymmetrization of \eqref{S_ref_sp} gives our (normalized) singlet state having $S=0$, hence $J=L=1$:
\begin{equation}
{\,^1\!P_{sp}}(R_0,R_1,R_2)=\frac{1}{\sqrt{2}}s^\uparrow(R_0)\wedge s^\downarrow(R_0)\wedge\bigg( s^\uparrow(R_1)\wedge p_1^\downarrow(R_2)-s^\downarrow(R_1)\wedge p_1^\uparrow(R_2)\bigg).
\label{1Psp}
\end{equation}
In the subspace $J_z=1$, we can construct a $\tP_{2}$ state by applying the lowering operator $J^-=L^-+S^-$:
$$J^-\big(\tP_{2,sp}(R_0,R_1,R_2)\big)=L^-\big(\tP_{2,sp}(R_0,R_1,R_2)\big)+S^-\big(\tP_{2,sp}(R_0,R_1,R_2)\big).$$ 
We know that the subspace $J_z=1$ is of dimension 3. Hence we deduce by orthogonality that our (normalized) $\tP_1$ state with $J=1$ and $J_z=1$ is
\begin{equation}
{\,^3\!P_{sp}}(R_0,R_1,R_2)=\frac{(L^--S^-)}2\tP_{2,sp}(R_0,R_1,R_2).
\label{3Psp}
\end{equation}
Note that
\begin{equation}
L^+L^-\big(\tP_{2,sp}(R_0,R_1,R_2)\big)=2\big(\tP_{2,sp}(R_0,R_1,R_2)\big), 
\label{LL_sp}
\end{equation}
\begin{equation}
S^+S^-\big(\tP_{2,sp}(R_0,R_1,R_2)\big)=2\big(\tP_{2,sp}(R_0,R_1,R_2)\big), 
\label{SS_sp}
\end{equation}
which explains why $1/2$ is the right normalization in \eqref{3Psp}. From \eqref{LL_sp} and \eqref{SS_sp} we see that for any observable $A$ commuting with $L$ and $S$ (for instance $A=H$, our nonrelativistic Hamiltonian), then we have for all $a,b$ and all $R_0,R_1,R_2$ and $R_0',R_1',R_2'$
\begin{align}
&\bigg\langle (aL^-+bS^-)\big(\tP_{2,sp}(R_0,R_1,R_2)\big)\,,\,
A\;(aL^-+bS^-)\big(\tP_{2,sp}(R'_0,R'_1,R'_2)\big)\bigg\rangle_{L^2(\R^3)^4}\nonumber\\ 
& \qquad\qquad =(a^2+b^2)\bigg\langle L^-\big(\tP_{2,sp}(R_0,R_1,R_2)\big)\,,\,
A\;L^-(\tP_{2,sp}(R'_0,R'_1,R'_2)\big)\bigg\rangle_{L^2(\R^3)^4}\nonumber\\
& \qquad\qquad =(a^2+b^2)\bigg\langle S^-\big(\tP_{2,sp}(R_0,R_1,R_2)\big)\,,\,
A\;S^-\big(\tP_{2,sp}(R'_0,R'_1,R'_2)\big)\bigg\rangle_{L^2(\R^3)^4}\nonumber\\ 
& \qquad\qquad =2(a^2+b^2)\bigg\langle \tP_{2,sp}(R_0,R_1,R_2)\,,\,
A\;\tP_{2,sp}(R'_0,R'_1,R'_2)\bigg\rangle_{L^2(\R^3)^4}.\label{eq:simplification_2_sp} 
\end{align}

\subsection*{$pd$ configurations}
We now switch to the calculation of the $pd$ configurations.

\begin{table}
\centering
\includegraphics[width=14cm]{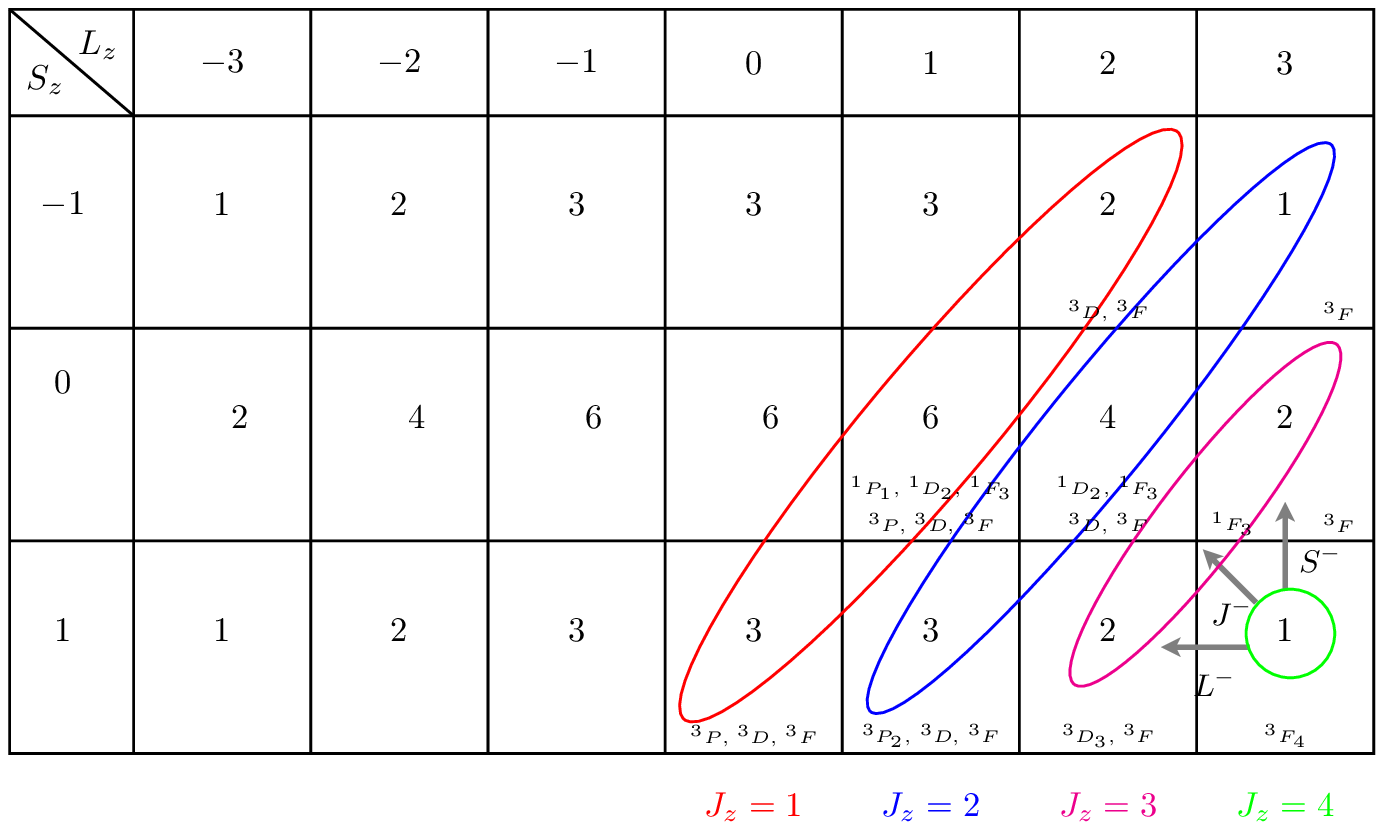}
\caption{Dimensions of the different common eigenspaces of $L_z$ and $S_z$ for $pd$ configurations. In each cell, we have indicated the type of state spanning the corresponding space.}
\label{tab:pd}
\end{table}

Table \ref{tab:pd} shows the dimension of the different common eigenspaces of $L_z$ and $S_z$ for $pd$ configurations. As before we may start from the $^3F_4$ state having the highest quantum numbers and derive our states in the $J=1$, $J_z=1$ sector by applying successively the lowering operators $L^-$, $S^-$ and $J^-$.

The final result is as follows. The $\sP_1$ state in the $J_z=1$ sector reads
\begin{multline}
{\,^1\!P_{pd}}(R_0,R_1,R_2)=\frac{1}{2\sqrt{5}}s^\uparrow(R_0)\wedge s^\downarrow(R_0)\wedge\bigg( \sqrt{6}p_{-1}^\uparrow(R_1)\wedge d_2^\downarrow(R_2)-\sqrt{3}p_{0}^\uparrow(R_1)\wedge d_1^\downarrow(R_2)\\
+p_{1}^\uparrow(R_1)\wedge d_0^\downarrow(R_2)-\sqrt{6}p_{-1}^\downarrow(R_1)\wedge d_2^\uparrow(R_2)+\sqrt{3}p_{0}^\downarrow(R_1)\wedge d_1^\uparrow(R_2)
-p_{1}^\downarrow(R_1)\wedge d_0^\uparrow(R_2)\bigg).
\label{1Ppd}
\end{multline}
The $\tP_1$ state reads 
\begin{equation}
{\,^3\!P_{pd}}(R_0,R_1,R_2)=\frac{L^--S^-}{2} \tP_{2,pd}(R_0,R_1,R_2)
\label{3Ppd}
\end{equation}
where
\begin{multline}
\tP_{2,pd}(R_0,R_1,R_2)= s^\uparrow(R_0)\wedge s^\downarrow(R_0)\wedge\bigg(\frac{\sqrt{6}}{\sqrt{10}}p_{-1}^\uparrow(R_1)\wedge d_{2}^\uparrow(R_2)-\frac{\sqrt{3}}{\sqrt{10}}p_{0}^\uparrow(R_1)\wedge d_{1}^\uparrow(R_2)\\
+\frac{1}{\sqrt{10}}p_{1}^\uparrow(R_1)\wedge d_{0}^\uparrow(R_2)\bigg)\label{def:3Ppd_2}
\end{multline}
is the unique $\tP_2$ configuration having $J_z=2$. Note that, like in the $sp$ case, one has
\begin{equation}
L^+L^-\big(\tP_{2,pd}(R_0,R_1,R_2)\big)=2\big(\tP_{2,pd}(R_0,R_1,R_2)\big), 
\label{LL_pd}
\end{equation}
\begin{equation}
S^+S^-\big(\tP_{2,pd}(R_0,R_1,R_2)\big)=2\big(\tP_{2,pd}(R_0,R_1,R_2)\big). 
\label{SS_pd}
\end{equation}
Therefore we deduce that \eqref{eq:simplification_2_sp} still holds for linear combinations of the triplets $sp$ and $pd$ and that for any observable $A$ commuting with $L$ and $S$,
\begin{align}
&\bigg\langle a\tP_{sp}(R_0,R_1,R_2)+b\tP_{pd}(R_0,R_3,R_4)\,,\,
A\;\big(a'\tP_{sp}(R'_0,R'_1,R'_2)+b'\tP_{pd}(R'_0,R'_3,R'_4)\big)\bigg\rangle_{L^2(\R^3)^4}\nonumber\\ 
&\qquad\qquad=\frac{1}{2}\bigg\langle aS^-\big(\tP_{2,sp}(R_0,R_1,R_2)\big)+bS^-\big(\tP_{2,pd}(R_0,R_3,R_4)\big)\,,\nonumber\\
&\qquad\qquad\qquad\qquad\,
A\;\big(a'S^-\big(\tP_{2,sp}(R_0',R_1',R'_2)\big)+b'S^-\big(\tP_{2,pd}(R'_0,R'_3,R'_4)\big)\big)\bigg\rangle_{L^2(\R^3)^4}\\ 
&\qquad\qquad=\bigg\langle a\tP_{2,sp}(R_0,R_1,R_2)+b\tP_{2,pd}(R_0,R_3,R_4)\,,\nonumber\\
&\qquad\qquad\qquad\qquad\,
A\;\big(a'\tP_{2,sp}(R_0',R_1',R'_2)+b'\tP_{2,pd}(R'_0,R'_3,R'_4)\big)\bigg\rangle_{L^2(\R^3)^4}.\label{eq:simplification_2_pd} 
\end{align}
Lastly the $\tD_1$ state is obtained through the more complicated formula
\begin{equation}
\tD_{pd}(R_0,R_1,R_2)=\frac{1}{6\sqrt{2}}\big(L^-L^--{3}L^-S^-+{6}S^-S^-\big)\tD_{3,pd}(R_0,R_1,R_2) 
\label{3Dpd}
\end{equation}
where $\tD_{3,pd}(R_0,R_1,R_2)$ is the unique $\tD_3$ state having $J_z=3$, given by
\begin{equation}
\tD_{3,pd}(R_0,R_1,R_2)=s^\uparrow(R_0)\wedge s^\downarrow(R_0)\wedge\bigg(\frac{\sqrt{2}}{\sqrt{3}}p_{0}^\uparrow(R_1)\wedge d_{2}^\uparrow(R_2)-\frac{1}{\sqrt{3}}p_{1}^\uparrow(R_1)\wedge d_{1}^\uparrow(R_2)\bigg).
\end{equation}

\section*{Appendix B. Energy expressions as functions of the radial components}
\setcounter{section}{2}
\addcontentsline{toc}{section}{Appendix B. Energy expressions as functions of the radial components}
\label{app:integrals}
In this appendix, we provide the formulas of the energy written in terms of the radial components of the orbitals. We will first need the
\begin{lemma}\label{4to2}
Assume that we have 6  mutually orthogonal functions, $f_1,\dots, f_6$. Then,
\begin{equation}
\pscal{f_1\wedge f_2 \wedge f_3\wedge f_4\,,\sum_{1\leq i<j\leq 4}\left(\frac{1}{|x_i-x_j|}\right)\, f_1\wedge f_2 \wedge f_5\wedge f_6}= 
\pscal{f_3\wedge f_4\,,\,\frac{1}{|x-y|}\,  f_5\wedge f_6}\,,
\end{equation}
\begin{multline}
\pscal{f_1\wedge f_2 \wedge f_3\wedge f_4\,,\left(\sum_{1\leq i<j\leq 4}\frac{1}{|x_i-x_j|}\right)\, f_1\wedge f_2 \wedge f_3\wedge f_6}\\
= \pscal{f_1\wedge f_4\,,\,\frac{1}{|x-y|}\,  f_1\wedge f_6}+\pscal{f_2\wedge f_4\,,\,\frac{1}{|x-y|}\,  f_2\wedge f_6}+\pscal{f_3\wedge f_4\,,\,\frac{1}{|x-y|}\, f_3\wedge f_6},
\end{multline}
and
\begin{multline}
\pscal{f_1\wedge f_2 \wedge f_3\wedge f_4\,,\,\left(\sum_{1\leq i<j\leq 4}\frac{1}{|x_i-x_j|}\right)\, f_1\wedge f_2 \wedge f_3\wedge f_4}\\
= \sum_{1\leq i<j\leq 4}\pscal{f_i\wedge f_j\,,\,\frac{1}{|x-y|}\, f_i\wedge f_j}\,.
\end{multline}
\end{lemma}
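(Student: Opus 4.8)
The statement to prove is Lemma \ref{4to2}, which gives the reduction of certain four-body Coulomb matrix elements between Slater determinants to two-body matrix elements, under the assumption that all six orbitals involved are mutually orthogonal.

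\medskip

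The plan is to use the Slater--Condon rules, or rather to re-derive the specific instances we need directly from the definition of the wedge product adopted in the paper, namely $(f_1\wedge\cdots\wedge f_N)(x_1,\dots,x_N)=(N!)^{-1/2}\det(f_i(x_j))$. First I would record the general principle: for a symmetric two-body operator $\mathbb{V}=\sum_{i<j}V(x_i-x_j)$ and two Slater determinants $\Phi=f_1\wedge\cdots\wedge f_4$, $\Phi'=g_1\wedge\cdots\wedge g_4$ built from orthonormal families, one has $\langle\Phi,\mathbb{V}\Phi'\rangle=\sum_{i<j}\big(\langle f_i\wedge f_j, V(x-y)\, g_i\wedge g_j\rangle\big)$ whenever $\Phi$ and $\Phi'$ share at least two orbitals in the same slots, with the convention that the sum runs over pairs and a single antisymmetrized two-body bracket $\langle f_i\wedge f_j, V\, g_k\wedge g_l\rangle$ expands to the difference of a ``direct'' and an ``exchange'' term. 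Concretely, I would expand $\langle\Phi,\mathbb{V}\Phi'\rangle$ by writing each determinant as a signed sum over permutations, use that $V(x_i-x_j)$ depends only on two coordinates, integrate out the other two, and invoke orthonormality to kill all terms except those where the permutations agree off the pair $\{i,j\}$. This is the standard computation behind the Slater--Condon rules; I would present it once in a compact form and then specialize.

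\medskip

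With that machinery in place, the three displayed identities follow by bookkeeping. For the first, $\Phi$ and $\Phi'$ agree in slots $1,2$ and differ in slots $3,4$ (with $f_3,f_4$ orthogonal to $f_5,f_6$): every pair $\{i,j\}$ with $i<j$ that does not equal $\{3,4\}$ forces, via orthogonality of the differing orbitals, a vanishing contribution, so only the $\{3,4\}$ term survives and yields $\langle f_3\wedge f_4, |x-y|^{-1}\, f_5\wedge f_6\rangle$. For the second, $\Phi$ and $\Phi'$ agree in slots $1,2,3$ and differ only in slot $4$ ($f_4\perp f_6$): the surviving pairs are exactly $\{1,4\},\{2,4\},\{3,4\}$, giving the three-term sum; the pair $\{1,2\},\{1,3\},\{2,3\}$ each produce a factor $\langle f_4,f_6\rangle=0$. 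For the third (diagonal) identity, $\Phi=\Phi'$, and all six pairs contribute, giving $\sum_{i<j}\langle f_i\wedge f_j,|x-y|^{-1}\,f_i\wedge f_j\rangle$; this is the familiar statement that the interaction energy of a Slater determinant is the sum of pair interaction energies (direct minus exchange) over all pairs of occupied orbitals. In each case one should double-check the signs: since the two determinants agree on the common slots, the permutation sign factors combine to $+1$, so no unexpected minus signs appear at the level of pairs (the minus signs are internal to each antisymmetrized two-body bracket).

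\medskip

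The main obstacle, such as it is, is purely organizational rather than conceptual: one must be careful with the combinatorics of which permutations survive and verify that the sign bookkeeping indeed produces $+1$ overall in front of each pair term. A clean way to avoid errors is to phrase everything through the one-body reduced density-matrix / pair-density formalism: write $\langle\Phi,\mathbb{V}\Phi'\rangle$ as a contraction of $V$ against the transition two-body density, and note that for determinants this density factorizes into (antisymmetrized products of) transition one-body densities $\gamma(x,x')=\sum_k g_k(x)\overline{f_k(x')}$ restricted appropriately; orthogonality then makes $\gamma$ collapse onto the shared orbitals. Either route works; I would favor the explicit permutation expansion since it is elementary and self-contained, and I would relegate the routine index-chasing to a couple of lines. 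No result beyond orthonormality and the definition of $\wedge$ is needed.
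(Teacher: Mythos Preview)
Your proposal is correct and aligns with the paper's approach: the paper does not actually give a detailed proof, stating only that ``the proof of the above lemma is based on long but straightforward computations.'' Your plan to expand the determinants via permutations, integrate out the variables not touched by the two-body potential, and use orthonormality to kill all but the surviving pair terms is precisely this straightforward computation, organized cleanly through the Slater--Condon rules; the sign bookkeeping you describe is accurate, and nothing further is needed.
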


The proof of the above lemma  is based on long but straightforward computations.
We use the above result to compute some quantities that are needed  in order to perform the numerical computations of Section \ref{sec:num}.

We start with the computation of $\,\pscal{\tP_{sp}(R_0,R_1,R_2)\,,\, H \,\tP_{sp}(R_0,R_1,R_2)}$ which is needed to verify \eqref{condition_3D1}. By \eqref{eq:simplification_2_sp} we have 
$$ \pscal{\tP_{sp}(R_0,R_1,R_2), \, H \,\tP_{sp}(R_0,R_1,R_2)} = \big\langle \tP_{2,sp}(R_0,R_1,R_2)\,,\, H \,
\tP_{2,sp}(R_0,R_1,R_2)\big\rangle\,.$$
A simple calculation shows that
\begin{align}
&\big\langle \tP_{2,sp}(R_0,R_1,R_2)\,,\,\, H\,
\tP_{2,sp}(R_0,R_1,R_2)\big\rangle\nonumber\\
&\qquad=\int_{\R^3} \left( |\nabla s_0(R_0)(x)|^2+\frac{|\nabla s_0(R_1)(x)|^2+|\nabla p_1(R_2)(x)|^2}{2}\right)\,dx\label{eq:maria1}\\
&\qquad\qquad\qquad -\int_{\R^3} \frac4{|x|}\,\left( 2| s_0(R_0)(x)|^2+| s_0(R_1)(x)|^2+| p_1(R_2)(x)|^2\right)\,dx\label{eq:maria2}\\
&\qquad\qquad\qquad+\pscal{ \,^3\!P_{2,sp}(R_0,R_1,R_2)\,,\,\left(\sum_{1\leq i<j\leq 4}\frac{1}{|x_i-x_j|}\right)\,^3\!P_{2,sp}(R_0,R_1,R_2) }\,.\label{eq:maria3}
\end{align}
The well-known properties of the harmonic spherical functions show that
\begin{multline*}
\eqref{eq:maria1}+\eqref{eq:maria2}=\int_0^\infty \left( r^2\,R'_0(r)^2+ \frac{r^2R'_1(r)^2}{2} + \frac{r^4}{2}\left|\left(\frac{R_2(r)}{r}\right)'\right|^2\right) \,dr\\
-\int_0^\infty 4r\,\left( 2R_0(r)^2+ R_1(r)^2+R_2(r)^2\right) \,dr\,. 
\end{multline*}
On the other hand, using Lemma \ref{4to2}, the last integral \eqref{eq:maria3} is equal to
\begin{align*}
 \eqref{eq:maria3}&=
\iint_{\R^3\times \R^3} | s_0(R_0)(x)|^2 |s_0(R_0)(y)|^2+\big(2| s_0(R_0)(x)|^2+|p_1(R_2)(x)|^2\big)\,|s_0(R_1)(y)|^2\,dx\,dy\\
&\qquad+ \iint_{\R^3\times \R^3}2| s_0(R_0)(x)|^2 |p_1(R_2)(y)|^2\,dx\,dy\\
&\qquad-\iint_{\R^3\times \R^3}  \big( s_0(R_0)(x)s_0(R_0)(y) +  \overline{p_1(R_2)(x)}p_1(R_2)(y) \big)\,s_0(R_1)(x)s_0(R_1)(y)   \,dx\,dy\\
&\qquad-\iint_{\R^3\times \R^3}  s_0(R_0)(x)s_0(R_0)(y) \overline{p_1(R_2)(x)}p_1(R_2)(y)  \,dx\,dy\,.
\end{align*}
Using the well known formulae that can be found for instance in  Slater's book \cite{Slater-60} (Section 13-3 and Appendix 20a), this can be rewritten as
\begin{align*}
 \eqref{eq:maria3}&=\int_0^\infty\!\!\!\int_0^\infty\bigg(R_0(s)^2R_0(t)^2+\big(2R_0(s)^2+R_2(s)^2\big)\, R_1(t)^2  +2R_0(s)^2 R_2(t)^2 \bigg)\,\frac{s^2\,t^2\,ds\,dt}{\max\,\{s,t\}}\\
&\qquad-\int_0^\infty\!\!\!\int_0^\infty  R_0(s) R_0(t)R_1(s)R_1(t)\,\frac{s^2\,t^2\,ds\,dt}{\max\,\{s,t\}}\\
&\qquad-\int_0^\infty\!\!\!\int_0^\infty
 \big(    R_2(s)(R_2(t)\, \big( R_1(s)R_1(t)+R_0(s)R_0(t) \big)\,
\big)\,\frac{s^2\,t^2\,\min\,\{s,t\}\,ds\,dt}{3\,\max^2\,\{s,t\}}\,.
\end{align*}

\medskip

We then go on to calculate the expression of $\,\pscal{H\Psi,\tP_{sp}(R_0,R_1,\delta R)}$, appearing in \eqref{condition_3P1}. As a corollary of Lemma \ref{4to2}, of \eqref{3Psp}, and of \eqref{eq:simplification_2_sp}, we obtain
\begin{align*}
&\pscal{ \,^3\!P_{sp}(R_0,R_1,R_2)\,,\,\left(\sum_{1\leq i<j\leq 4}\frac{1}{|x_i-x_j|}\right)\,^3\!P_{sp}(R_0,R_1,\delta R) }\\ 
&\qquad\qquad= \pscal{ \,^3\!P_{2,sp}(R_0,R_1,R_2)\,,\,\left(\sum_{1\leq i<j\leq 4}\frac{1}{|x_i-x_j|}\right)\,^3\!P_{2,sp}(R_0,R_1,\delta R) }\\
&\qquad\qquad= \iint_{\R^3\times\R^3}\left(2|s_0(R_0)(x)|^2+|s_0(R_1)(x)|^2 \right)\,  p_1(R_2)(y)\, p_1(\delta R)(y)\,dx\,dy \\
&\qquad\qquad\qquad\qquad-\iint_{\R^3\times\R^3}  s_0(R_0)(x)\,p_1(\delta R)(x)\,  p_1(R_2)(y)\, s_0(R_0)(y)\,dx\,dy \\
&\qquad\qquad\qquad\qquad -\iint_{\R^3\times\R^3}  s_0(R_1)(x)\,p_1(\delta R)(x)\,  p_1(R_2)(y)\, s_0(R_1)(y)\,dx\,dy \,.
\end{align*}
Using again the formulae  in Slater's book \cite{Slater-60} (Section 13-3 and Appendix 20a), we finally deduce that
\begin{align*}
&\pscal{ \,^3\!P_{sp}(R_0,R_1,R_2)\,,\,\left(\sum_{1\leq i<j\leq 4}\frac{1}{|x_i-x_j|}\right)\,^3\!P_{sp}(R_0,R_1,\delta R) }\\ 
&\qquad\qquad=\int_0^\infty\int_0^\infty\left(2|R_0(s)|^2+|R_1(s)|^2 \right)\,  R_2(t)\, \delta R(t)\,\frac{s^2\,t^2}{\max{\{s,t\}}}\,ds\,dt\\
&\qquad\qquad\qquad\qquad -\int_0^\infty\int_0^\infty  R_0(s)\,\delta R(s)\,  R_2(t)\, R_0(t)\;\frac{s^2\,t^2\,\min{\{s,t\}}}{3\,\max^2{\{s,t\}}}\,ds\,dt\\
 &\qquad\qquad\qquad\qquad -\int_0^\infty\int_0^\infty  R_1(s)\,\delta R(s)\,  R_2(t)\, R_1(t)\;\frac{s^2\,t^2\,\min{\{s,t\}}}{3\,\max^2{\{s,t\}}}\,ds\,dt \,.
\end{align*}
On the other hand, by \eqref{eq:simplification_2_pd}, and using the same methods as above, we find
\begin{align*}
&\pscal{ \,^3\!P_{pd}(R_0,R_2,R_4)\,,\,\left(\sum_{1\leq i<j\leq 4}\frac{1}{|x_i-x_j|}\right)\,^3\!P_{sp}(R_0,R_1,\delta R) }\\ 
&\qquad\qquad=\pscal{ \,^3\!P_{2,pd}(R_0,R_2,R_4)\,,\,\left(\sum_{1\leq i<j\leq 4}\frac{1}{|x_i-x_j|}\right)\,^3\!P_{2,sp}(R_0,R_1,\delta R) }\\
&\qquad\qquad=-{\sqrt{2}}\, \int_0^\infty\int_0^\infty R_1(s)\,   R_2(s)\,\delta R(t)\,   R_4(t)\;\frac{s^2\,t^2\,\min{\{s,t\}}}{3\,\max^2{\{s,t\}}}\,ds\,dt \,.
\end{align*}

\bigskip\noindent{\bf Acknowledgments.} {\small M.J.E. and M.L. would like to thank Paul Indelicato who drew their attention to the problem studied in the present paper, and \'Eric S\'er\'e for interesting discussions. This work was partly done while the three authors were visiting the \emph{Institute for Mathematics and its Applications} of the {University of Minnesota} at Minneapolis (USA), during the annual program ``Mathematics and Chemistry''. M.J.E. and  M.L. have been supported by the ANR project \emph{ACCQuaRel} of the French Ministry of Research.}


\begin{thebibliography}{10}

\bibitem{Adams-62}
{\sc W.~H. Adams}, {\em Stability of {H}artree-{F}ock states}, Phys. Rev., 127
  (1962), pp.~1650--1658.

\bibitem{CanGalLew-06}
{\sc {\'E}.~Canc{\`e}s, H.~Galicher, and M.~Lewin}, {\em Computing electronic
  structures: a new multiconfiguration approach for excited states}, J. Comput.
  Phys., 212 (2006), pp.~73--98.

\bibitem{CondonShortley-63}
{\sc E.~U. {Condon} and G.~H. {Shortley}}, {\em {The theory of atomic
  spectra}}, Cambridge: University Press, 1963, 1963.

\bibitem{CouFis-49}
{\sc C.~A. Coulson and I.~Fischer}, {\em Notes on the molecular-orbital
  treatment of the hydrogen molecule}, Philos. Mag., 40 (1949), pp.~386--393.

\bibitem{DavBor-83}
{\sc E.~R. Davidson and W.~T. Borden}, {\em Symmetry breaking in polyatomic
  molecules: real and artifactual}, J. Phys. Chem., 87 (1983), pp.~4783--4790.

\bibitem{EisMor-00}
{\sc W.~{Eisfeld} and K.~{Morokuma}}, {\em {A detailed study on the symmetry
  breaking and its effect on the potential surface of NO3}}, J. Chem. Phys.,
  113 (2000), pp.~5587--5597.

\bibitem{EngLiu-83}
{\sc L.~{Engelbrecht} and B.~{Liu}}, {\em {An orthogonal orbital {MCSCF} plus
  {CI} treatment of molecular symmetry breaking. {I}. {T}he lowest $^{3}B_{2}$
  and $^{3}A_{2}$ states of $CO_{2}$}}, J. Chem. Phys., 78 (1983),
  pp.~3097--3106.

\bibitem{EstSer-99}
{\sc M.~J. Esteban and {\'E}.~S{\'e}r{\'e}}, {\em Solutions of the
  {D}irac-{F}ock equations for atoms and molecules}, Commun. Math. Phys., 203
  (1999), pp.~499--530.

\bibitem{EstSer-01}
\leavevmode\vrule height 2pt depth -1.6pt width 23pt, {\em Nonrelativistic
  limit of the {D}irac-{F}ock equations}, Ann. Henri Poincar{\'e}, 2 (2001),
  pp.~941--961.

\bibitem{Friesecke-03}
{\sc G.~Friesecke}, {\em The multiconfiguration equations for atoms and
  molecules: charge quantization and existence of solutions}, Arch. Ration.
  Mech. Anal., 169 (2003), pp.~35--71.

\bibitem{FroeseFischer-00}
{\sc C.~{Froese Fischer}}, {\em {Multiconfiguration Dirac-Hartree-Fock
  Calculations for Be-like Intercombination Lines Revisited}}, Phys. Scripta,
  62 (2000), pp.~458--462.

\bibitem{ATSP}
{\sc C.~{Froese Fisher}}, {\em {ATSP} package}, 2004.
\newblock See {\tt http$:$//atoms.vuse.vanderbilt.edu/}.

\bibitem{HuaKimCheDes-82}
{\sc K.~N. Huang, Y.~K. Kim, K.~T. Cheng, and J.~P. Desclaux}, {\em Correlation
  and relativistic effects in spin-orbit splitting}, Phys. Rev. Lett., 48
  (1982), pp.~1245--1248.

\bibitem{IndLinDes-05}
{\sc P.~{Indelicato}, E.~{Lindroth}, and J.~P. {Desclaux}}, {\em
  {Nonrelativistic Limit of {D}irac-{F}ock Codes: The Role of {B}rillouin
  Configurations}}, Phys. Rev. Lett., 94 (2005), p.~013002.

\bibitem{KimParMarIndDes-98}
{\sc Y.-K. {Kim}, F.~{Parente}, J.~P. {Marques}, P.~{Indelicato}, and J.~P.
  {Desclaux}}, {\em {Failure of multiconfiguration Dirac-Fock wave functions in
  the nonrelativistic limit}}, Phys. Rev. A, 58 (1998), pp.~1885--1888.

\bibitem{KutMor-96}
{\sc W.~{Kutzelnigg} and I.~J.~D. {Morgan}}, {\em {Hund's rules}}, Zeitschrift
  fur Physik D Atoms Molecules Clusters, 36 (1996), pp.~197--214.

\bibitem{LandauLifchitz}
{\sc L.~Landau and E.~Lifchitz}, {\em Quantum mechanics}, Pergamon Press, 1977.

\bibitem{Lewin-04a}
{\sc M.~Lewin}, {\em Solutions of the multiconfiguration equations in quantum
  chemistry}, Arch. Ration. Mech. Anal., 171 (2004), pp.~83--114.

\bibitem{LieSim-77}
{\sc E.~H. Lieb and B.~Simon}, {\em The {H}artree-{F}ock theory for {C}oulomb
  systems}, Commun. Math. Phys., 53 (1977), pp.~185--194.

\bibitem{Lions-87}
{\sc P.-L. Lions}, {\em Solutions of {H}artree-{F}ock equations for {C}oulomb
  systems}, Commun. Math. Phys., 109 (1987), pp.~33--97.

\bibitem{LykPra-63}
{\sc P.~{Lykos} and G.~W. {Pratt}}, {\em {Discussion on The {H}artree-{F}ock
  Approximation}}, Reviews of Modern Physics, 35 (1963), pp.~496--501.

\bibitem{McLLenPacEll-85}
{\sc A.~McLean, B.~{Lengsfield III}, J.~Pacansky, and Y.~Ellinger}, {\em
  Symmetry breaking in molecular calculations and the reliable prediction of
  equilibrium geometries. the formyloxyl radical as an example}, J. Chem.
  Phys., 83 (1985), pp.~3567--3576.

\bibitem{Pyper-83}
{\sc N.~C. {Pyper}}, {\em {A comparison of different variants of the
  multi-configuration Dirac-Fock description of fine structure in light
  atoms}}, J. Phys. B, 16 (1983), pp.~L211--L215.

\bibitem{PyySch-04}
{\sc P.~Pyykko and P.~Schwerdtfeger}, {\em Relativistic electronic structure
  theory}, Elsevier, 2004.

\bibitem{Slater-30b}
{\sc J.~C. Slater}, {\em Cohesion in monovalent metals}, Phys. Rev., 35 (1930),
  pp.~509--529.

\bibitem{Slater-51}
\leavevmode\vrule height 2pt depth -1.6pt width 23pt, {\em Magnetic effects and
  the hartree-fock equation}, Phys. Rev., 82 (1951), pp.~538--541.

\bibitem{Slater-60}
{\sc J.~C. Slater}, {\em Quantum theory of atomic structure}, vol.~1 of
  Cambridge Tracts in Mathematics, McGraw-Hill, 1960.

\bibitem{Thaller}
{\sc B.~Thaller}, {\em The {D}irac equation}, Texts and Monographs in Physics,
  Springer-Verlag, Berlin, 1992.

\bibitem{Thaller-04}
\leavevmode\vrule height 2pt depth -1.6pt width 23pt, {\em Advanced Visual
  Quantum Mechanics}, Springer, New York, Heidelberg, 2004.

\bibitem{Thouless-60}
{\sc D.~J. Thouless}, {\em Stability conditions and nuclear rotations in the
  hartree-fock theory}, Nuclear Physics, 21 (1960), pp.~225 -- 232.

\bibitem{WooPyp-80}
{\sc C.~P. {Wood} and N.~C. {Pyper}}, {\em {Relativistic corrections to carbon
  atom energy levels and their relation to the singlet-triplet splitting in
  methylene}}, Molecular Physics, 41 (1980), pp.~149--158.

\end{thebibliography}
%

\end{document}